\newtheorem{theorem}{Theorem}
\newtheorem{lemma}{Lemma}
\newtheorem{definition}{Definition}
\newtheorem{proposition}{Proposition}
\newcommand{\eps}{\varepsilon}
\newcommand{\cD}{\mathcal{D}}
\newcommand{\E}{\mathbb{E}}
\title{Asymptotic Analysis of Weighted Fair Division}
\author{
Pasin Manurangsi$^1$
\and
Warut Suksompong$^2$\and
Tomohiko Yokoyama$^3$\\
\affiliations
$^1$Google Research, Thailand\\
$^2$National University of Singapore, Singapore\\
$^3$The University of Tokyo, Japan
}
\begin{document}

\maketitle

\begin{abstract}
Several resource allocation settings involve agents with unequal entitlements represented by weights.
We analyze weighted fair division from an asymptotic perspective: if $m$ items are divided among $n$ agents whose utilities are independently sampled from a probability distribution, when is it likely that a fair allocation exist?
We show that if the ratio between the weights is bounded, a weighted envy-free allocation exists with high probability provided that $m = \Omega(n\log n/\log\log n)$, generalizing a prior unweighted result.
For weighted proportionality, we establish a sharp threshold of $m = n/(1-\mu)$ for the transition from non-existence to existence, where $\mu\in (0,1)$ denotes the mean of the distribution.
In addition, we prove that for two agents, a weighted envy-free (and weighted proportional) allocation is likely to exist if $m = \omega(\sqrt{r})$, where $r$ denotes the ratio between the two weights.
\end{abstract}

\section{Introduction}

The fair allocation of scarce resources is a fundamental problem in economics and has received substantial attention recently in computer science \citep{BramsTa96,RobertsonWe98,Moulin19}.
Research in fair division has sought to quantify fairness via precise mathematical definitions, among which two of the most important are envy-freeness and proportionality.
An allocation is said to be \emph{envy-free} if each agent values her own bundle at least as much as any other agent's bundle \citep{Foley67,Varian74}.
It is called \emph{proportional} if every agent values her bundle at least $1/n$ of her value for the entire set of resources, where $n$ denotes the number of agents \citep{Steinhaus48}.

When allocating indivisible items, like houses, cars, or musical instruments, an envy-free or proportional allocation may not exist---this is the case, for example, when all items are valuable and the number of items $m$ is smaller than $n$.
In light of this, a line of work has focused on the question of \emph{when} a fair allocation is likely to exist if the agents' utilities for the items are drawn independently from a probability distribution.\footnote{We survey this and another related line of work in \Cref{sec:related-work}.}
For instance, \citet{ManurangsiSu20,ManurangsiSu21} showed that an envy-free allocation exists with high probability\footnote{That is, the probability that it exists converges to $1$ as $n\rightarrow\infty$.} provided that $m = \Omega(n\log n / \log\log n)$.
Not only is this bound tight, but it can also be achieved by a simple round-robin algorithm which lets the agents take turns picking their favorite item from the remaining items until the items run out.
The same authors also proved that a proportional allocation is likely to exist as long as $m\ge n$.

The aforementioned results, like most of the work in fair division, assume that all agents have the same entitlement to the resource.
In recent years, a number of researchers have explored a more general framework where agents may have varying entitlements represented by weights \citep{FarhadiGhHa19,AzizMoSa20,AzizGaMi23,ChakrabortyIgSu21,ChakrabortySeSu24,HoeferScVa24,SpringerHaYa24}.
This broader framework allows us to model scenarios such as distributing resources among communities, where larger communities naturally deserve a larger portion of the resource, as well as dividing inheritance, where closer relatives typically receive a greater share of the bequest.
Fortunately, both proportionality and envy-freeness can be extended to the weighted setting in an intuitive manner.
As an example, if there are three agents with weights $1$, $3$, and $6$, then weighted proportionality requires the first agent to receive at least $1/10$ of her value for the entire set of items, while weighted envy-freeness stipulates that the second agent should not value the third agent's bundle more than twice the value of her own bundle.

In addition to its inherent motivation, interest in weighted fair division stems from the fact that it exhibits several differences and brings a range of new challenges compared to its unweighted counterpart.
For instance, in the absence of weights, it is trivial to show that the round-robin algorithm guarantees \emph{envy-freeness up to one item (EF1)}, meaning that if an agent envies another agent, then the envy disappears upon removing some item in the latter agent's bundle.
By contrast, while it is possible to extend the round-robin algorithm to incorporate weights and prove that the resulting algorithm ensures \emph{weighted EF1 (WEF1)}, doing so is far less straightforward \citep{ChakrabortyIgSu21,WuZhZh23}.
In this paper, we shall analyze weighted fair division from an asymptotic perspective.
Do the same relations between $m$ and $n$ continue to guarantee the existence of fair allocations when agents may have different weights?

\subsection{Our Results}

We assume that each agent's utility for each item is drawn independently from a non-atomic distribution $\cD$ over $[0,1]$.
For most results, we also assume that $\cD$ is \emph{PDF-bounded}, that is, its probability distribution function is bounded between $\alpha$ and $\beta$ throughout $[0,1]$ for some constants $\alpha,\beta > 0$.
All of our existence results come with polynomial-time algorithms.

In \Cref{sec:WEF,sec:WPROP}, we consider the setting where the ratio between the weights is bounded.
In \Cref{sec:WEF}, we show that if $m = \Omega(n\log n/\log\log n)$, then the weighted picking sequence algorithm of \citet{ChakrabortyIgSu21} produces a weighted envy-free allocation with high probability; this generalizes the corresponding unweighted result by \citet{ManurangsiSu21}.
Along the way, we provide an alternative proof that the allocation returned by this algorithm always satisfies WEF1; this proof is arguably simpler than existing proofs \citep{ChakrabortyIgSu21,WuZhZh23} and may therefore be of independent interest.
In \Cref{sec:WPROP}, we turn our attention to weighted proportionality and prove that, interestingly, the transition from non-existence to existence depends on the mean of the distribution~$\cD$: it occurs at $m = n/(1-\mu)$, where $\mu\in (0,1)$ denotes the mean of $\cD$.
This implies that achieving (weighted) proportionality is more difficult than in the unweighted setting, for which the threshold is $m = n$ \citep{ManurangsiSu21}.

In \Cref{sec:two-agents}, we focus on the case of two agents but allow the ratio $r\ge 1$ between the agents' weights to grow; in this case, weighted envy-freeness and weighted proportionality are equivalent.
We show that if $m = \omega(\sqrt{r})$, then the probability that a weighted envy-free (and therefore weighted proportional) allocation exists approaches~$1$ as $r\rightarrow\infty$.
We also establish the tightness of this bound.

\subsection{Further Related Work}
\label{sec:related-work}

The asymptotic analysis of fair division was initiated by \citet{DickersonGoKa14}, who showed that an envy-free allocation is likely to exist if $m = \Omega(n\log n)$, but unlikely to exist when $m = n + o(n)$.
\citet{ManurangsiSu20} strengthened these results by exhibiting that existence is likely as long as $m\ge 2n$ if $m$ is divisible by $n$, but unlikely even when $m = \Theta(n\log n/\log\log n)$ if $m$ is not ``almost divisible'' by $n$.
The gap in the non-divisible case was closed by \citet{ManurangsiSu21}, who demonstrated the existence with high probability when $m = \Omega(n\log n/\log\log n)$ via the round-robin algorithm.
The same authors also proved that a proportional allocation is likely to exist when $m\ge n$, generalizing earlier results by \citet{Suksompong16} and \citet{AmanatidisMaNi17}.
\citet{ManurangsiSu17} studied envy-freeness when the agents are partitioned into groups, while \citet{YokoyamaIg25} performed an asymptotic analysis of ``class envy-free'' matchings.
Beyond envy-freeness and proportionality, \citet{KurokawaPrWa16} and \citet{FarhadiGhHa19} considered \emph{maximin share fairness}---a weaker notion than proportionality---the latter authors also allowing unequal entitlements.
\citet{BaiGo22} considered an extension where different agents may have different distributions, whereas \citet{BaiFeGo22} investigated a ``smoothed utility model'' in which each agent has a base utility for each item and this utility is ``boosted'' with some probability.
\citet{BenadeHaPs24} examined a stochastic model where the agents' utility functions can be non-additive.
\citet{ManurangsiSu25} conducted an asymptotic analysis of fair division for chores (i.e., items that yield negative utilities).

While weighted fair allocation has previously been studied for \emph{divisible} items \citep{Segalhalevi19,CrewNaSp20,CsehFl20}, following the broader trend in fair division, it has been intensively examined in the context of \emph{indivisible} items over the last few years.
Several authors have proposed and studied variants of envy-freeness, proportionality, and maximin share in the setting with entitlements \citep{FarhadiGhHa19,AzizMoSa20,BabaioffNiTa21,BabaioffEzFe24,ChakrabortyIgSu21,ChakrabortyScSu21,ChakrabortySeSu24,WuZhZh23,SpringerHaYa24,MontanariScSu25}.
For an extensive overview of weighted fair division, we refer to the survey by \citet{Suksompong25}.

\section{Preliminaries}

Let $[k] := \{1,2,\ldots,k\}$ for any positive integer $k$.
We want to allocate a set $M=[m]$ of indivisible items among a set $N=[n]$ of agents.
Each agent $i\in N$ has a utility $u_i(g)\ge 0$ for each item $g\in M$, and a positive weight $w_i\in \mathbb{R}_{>0}$. 
Let $w_{\max}$, $w_{\min}$, and $W$ denote the maximum, minimum, and sum of the agents' weights, respectively. 
We assume that utilities are \emph{additive}, i.e., $u_i(S) = \sum_{g\in S} u_i(g)$ for any subset $S\subseteq M$.
We refer to any subset of items as a \emph{bundle}.
An allocation $A=(A_1,A_2,\ldots,A_n)$ is a partition of $M$ into $n$ bundles, where $A_i$ represents the bundle allocated to agent $i\in N$.

We consider a number of fairness notions.
An allocation~$A$ is said to be \emph{weighted envy-free (WEF)} if for every pair of agents $i,j\in N$, it holds that $\frac{u_i(A_i)}{w_i} \ge \frac{u_i(A_j)}{w_j}$.
As a relaxation of WEF, an allocation~$A$ is \emph{weighted envy-free up to one item (WEF1)} if for all $i,j\in N$ with $A_j\ne\emptyset$, there exists an item $g\in A_j$ such that $\frac{u_i(A_i)}{w_i} \ge \frac{u_i(A_j \setminus \{g\})}{w_j}$.
An allocation $A$ is said to be \emph{weighted proportional (WPROP)} if $u_i(A_i) \ge \frac{w_i}{W}\cdot u_i(M)$ for all $i\in N$.
Note that every allocation that satisfies WEF also satisfies WPROP.

For each agent $i\in N$ and each item $g\in M$, we assume that the utility $u_i(g)$ is drawn independently from a probability distribution $\mathcal{D}$ over $[0,1]$.
A distribution is \emph{non-atomic} if it assigns zero probability to any single point.
Let $f_{\mathcal{D}}$ denote the probability density function (PDF) of $\mathcal{D}$. 
For $\alpha,\beta > 0$,
we say that a distribution $\mathcal{D}$ is $(\alpha,\beta)$-\emph{PDF-bounded} if it is non-atomic and $\alpha \le f_{\mathcal{D}}(x) \le \beta$ for all $x\in [0,1]$ \citep{ManurangsiSu21}. 
A distribution $\mathcal{D}$ is \emph{PDF-bounded} if it is $(\alpha,\beta)$-PDF-bounded for some constants $\alpha,\beta > 0$. 
A random event is said to occur \emph{with high probability} if its probability approaches $1$ as $n\to \infty$.
We use $\log$ to denote the natural logarithm (with base $e$).

\begin{figure*}[t]
    \begin{tikzpicture}
        \centering
        \newcommand{\drawgroup}[2]{
            \foreach \x/\label/\isI/\num in {#2}
            {   
                \if\isI1
                    \draw[fill=gray!30] (#1+\x,0) circle (0.48);
                \else
                    \draw[fill=white] (#1+\x,0) circle (0.48);
                \fi
                \node[scale=0.8] at (#1+\x,0) {\label};
                \if\isI1
                    \node[above] at (#1+\x,0.5) {\scriptsize $s^i(\num)$};
                \fi
            }
        }
        \drawgroup{0}{0/$g_{1}^j$/0/0, 1/$g_{1}^i$/1/1, 2/$g_{1,1}^j$/0/0, 3/$g_{1,2}^j$/0/0}
        \node at (4,0) {$\cdots$};
        \drawgroup{5}{0/$g_{1,\tau_1}^j$/0/0, 1/$g_{2}^i$/1/2, 2/$g_{2,1}^j$/0/0, 3/$g_{2,2}^j$/0/0}
        \node at (9,0) {$\cdots$};
        \drawgroup{10}{0/$g_{2,\tau_2}^j$/0/0, 1/$g_{3}^i$/1/3}
        \node at (12,0) {$\cdots$};
        \drawgroup{13}{0/$g_{t_i(m)}^i$/1/{t_i(m)}, 1/$g_{t_i(m),1}^j$/0/0, 2/$g_{t_i(m),2}^j$/0/0}
        \node at (16,0) {$\cdots$};
    \end{tikzpicture}
    \caption{An example of the picks by agents $i$ and $j$ in the proof of \Cref{prop:WEF1-alternative}}
    \label{fig:item-order}
\end{figure*}

We now state two lemmas that will be useful for our purposes.
In our analysis of the weighted picking sequence algorithm, we will apply Abel's summation formula, which allows us to rewrite a sum of products in a different form.
\begin{lemma}[Abel's summation formula]\label{lemma:Abel_summation}
    For any sequences of real numbers $(a_1,a_2,\ldots, a_n)$ and $(b_1,b_2,\ldots,b_n)$, we have 
    \[
        \sum_{i=1}^n a_i b_i = a_n \sum_{i=1}^n b_i + \sum_{i=1}^{n-1} \left((a_i-a_{i+1}) \sum_{i'=1}^i b_{i'}\right).
    \]
\end{lemma}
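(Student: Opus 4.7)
The plan is to prove this by the standard summation-by-parts trick: rewrite each $b_i$ as the difference of consecutive partial sums and then reindex. Concretely, I would introduce $B_i := \sum_{i'=1}^{i} b_{i'}$ for $i \in \{0,1,\ldots,n\}$, with the convention $B_0 = 0$, so that $b_i = B_i - B_{i-1}$ for every $i \in [n]$.

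Substituting this telescoping identity into the left-hand side gives
\[
\sum_{i=1}^{n} a_i b_i \;=\; \sum_{i=1}^{n} a_i (B_i - B_{i-1}) \;=\; \sum_{i=1}^{n} a_i B_i \;-\; \sum_{i=1}^{n} a_i B_{i-1}.
\]
The next step is to reindex the second sum by letting $j = i-1$, which turns it into $\sum_{j=0}^{n-1} a_{j+1} B_j$. Since $B_0 = 0$, the $j=0$ term vanishes, so this equals $\sum_{i=1}^{n-1} a_{i+1} B_i$.

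Combining the two sums and peeling off the $i=n$ term from the first one then yields
\[
\sum_{i=1}^{n} a_i b_i \;=\; a_n B_n \;+\; \sum_{i=1}^{n-1} (a_i - a_{i+1}) B_i,
\]
which is exactly the claimed identity after substituting back $B_n = \sum_{i=1}^{n} b_i$ and $B_i = \sum_{i'=1}^{i} b_{i'}$.

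There is no real obstacle here; the identity is a purely algebraic manipulation and could equally well be established by a straightforward induction on $n$ (base case $n=1$ reduces to $a_1 b_1 = a_1 b_1$, and the inductive step adds $a_{n+1} b_{n+1}$ to both sides and rearranges one telescoping term). I would pick the direct reindexing proof since it is shorter and makes the role of the boundary term $a_n \sum_{i=1}^n b_i$ most transparent.
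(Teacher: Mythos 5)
Your proof is correct: the substitution $b_i = B_i - B_{i-1}$ with $B_0 = 0$, the reindexing of $\sum_{i=1}^{n} a_i B_{i-1}$, and the peeling off of the $i=n$ term together yield exactly the stated identity. The paper itself gives no proof of this lemma (it is quoted as the classical Abel summation formula), and your argument is the standard summation-by-parts derivation one would supply, so there is nothing to reconcile.
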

The following Chernoff bound is a standard concentration bound which will be used when we analyze the asymptotic existence of WPROP allocations and the case of two agents.
\begin{lemma}[Chernoff bound]\label{lemma:Chernoff}
    Let $X_1,X_2,\ldots,X_d$ be independent random variables such that $X_i \in [0,1]$ for all $i\in [d]$, and let $X = \sum_{i=1}^d X_i$.
    Then, for any $\delta > 0$, we have
    \begin{enumerate}
        \item $\mathrm{Pr}\left[X \ge (1+\delta) \mathbb{E}[X] \right] \le \mathrm{exp}\left({-\frac{\delta^2 }{2 + \delta}} \mathbb{E}[X]\right)$, and
        \item $\mathrm{Pr}\left[X \le (1-\delta) \mathbb{E}[X] \right] \le \mathrm{exp}\left({-\frac{ \delta^2 }{2}}\mathbb{E}[X]\right)$.
    \end{enumerate}
\end{lemma}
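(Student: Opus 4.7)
The plan is to use the classical moment generating function (MGF) method: apply Markov's inequality to the nonnegative random variable $e^{tX}$ for a suitable real $t$, exploit independence to factorize the MGF, bound each factor using that $X_i \in [0,1]$, and finally optimize over $t$.

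For part~1, I would fix $t > 0$ and write
\[
\mathrm{Pr}[X \ge (1+\delta)\mu] = \mathrm{Pr}[e^{tX} \ge e^{t(1+\delta)\mu}] \le e^{-t(1+\delta)\mu}\cdot \E[e^{tX}],
\]
where $\mu := \E[X]$. Independence of the $X_i$'s gives $\E[e^{tX}] = \prod_{i=1}^d \E[e^{tX_i}]$. To bound each factor, I would use the fact that $X_i \in [0,1]$ and convexity of $x \mapsto e^{tx}$ to get $e^{tX_i} \le 1 + (e^t-1)X_i$ pointwise; taking expectations and using $1+y \le e^y$ yields $\E[e^{tX_i}] \le \exp((e^t-1)\E[X_i])$. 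Multiplying gives $\E[e^{tX}] \le \exp((e^t-1)\mu)$, so
\[
\mathrm{Pr}[X \ge (1+\delta)\mu] \le \exp\bigl(\mu\bigl((e^t-1) - (1+\delta)t\bigr)\bigr).
\]
Choosing $t = \log(1+\delta) > 0$ reduces the right-hand side to $\exp(-\mu\, h(\delta))$ with $h(\delta) := (1+\delta)\log(1+\delta) - \delta$.

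For part~2, I would run the symmetric argument with $e^{-tX}$: for $t > 0$,
\[
\mathrm{Pr}[X \le (1-\delta)\mu] \le e^{t(1-\delta)\mu}\cdot \E[e^{-tX}],
\]
and the same convexity bound (applied to $-t$ instead of $t$) gives $\E[e^{-tX}] \le \exp((e^{-t}-1)\mu)$. Setting $t = -\log(1-\delta) > 0$ collapses the exponent to $-\mu\, g(\delta)$ with $g(\delta) := (1-\delta)\log(1-\delta) + \delta$.

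What remains are the two elementary calculus inequalities $h(\delta) \ge \delta^2/(2+\delta)$ for $\delta > 0$ and $g(\delta) \ge \delta^2/2$ for $\delta \in (0,1)$; these would constitute the main (though entirely routine) technical step. Both can be verified by checking that the corresponding difference vanishes at $\delta = 0$ and has nonnegative derivative on the relevant interval, after clearing denominators in the first case. Combining these with the bounds above gives the two claimed inequalities and completes the proof.
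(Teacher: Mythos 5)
The paper does not prove this lemma at all: it is invoked as a standard, off-the-shelf concentration bound (the multiplicative Chernoff bounds), so there is no in-paper argument to compare yours against. Your proposal is the canonical moment-generating-function proof and it is correct: Markov's inequality applied to $e^{\pm tX}$, factorization by independence, the endpoint-convexity bound $e^{tx}\le 1+(e^t-1)x$ for $x\in[0,1]$ followed by $1+y\le e^y$, and the optimal choices $t=\log(1+\delta)$ and $t=-\log(1-\delta)$. The two remaining calculus facts also check out: $(1+\delta)\log(1+\delta)-\delta\ge \delta^2/(2+\delta)$ follows from the standard bound $\log(1+\delta)\ge 2\delta/(2+\delta)$ after differentiating, and $(1-\delta)\log(1-\delta)+\delta\ge\delta^2/2$ follows from $-\log(1-\delta)\ge\delta$. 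The only cosmetic omission is that part~2 is stated for all $\delta>0$ while your optimization only makes sense for $\delta\in(0,1)$; for $\delta\ge 1$ the left-hand side is $\Pr[X\le(1-\delta)\E[X]]\le\Pr[X\le 0]\le e^{-\E[X]}$ (let $t\to\infty$ in $\E[e^{-tX}]\le e^{-(1-e^{-t})\E[X]}$), which is at most the claimed bound, so the statement still holds. This is exactly the level of proof one would supply if the paper chose to include one.
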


\section{Weighted Envy-Freeness}
\label{sec:WEF}

In this section, we consider weighted envy-freeness. 
We provide an alternative proof that the weighted picking sequence algorithm of \citet{ChakrabortyIgSu21} always outputs a WEF1 allocation.
Our proof also lends itself to the asymptotic analysis of WEF, which we present as Theorem~\ref{thm:WeightedPicking}.

\subsection{Alternative Proof of WEF1}
\label{sec:WEF1-alternative}

We first describe the weighted picking sequence algorithm (Algorithm~\ref{alg:WPS-mechanism}). 
Define a \emph{step} as an iteration of the while-loop (lines~\ref{code:for-start}--\ref{code:wps}).
Here, $t_i$ represents the number of items that agent $i$ has picked so far, and in each step, an agent $i$ who minimizes $t_i/w_i$ picks her favorite item from the remaining items.
\begin{algorithm}[H]
    \caption{Weighted Picking Sequence Algorithm}
    \label{alg:WPS-mechanism}
    \begin{algorithmic}[1]
        \REQUIRE $N,M,(u_i(g))_{i\in N, g\in M}, (w_i)_{i\in N}$
        \STATE $A_i \leftarrow \emptyset$ and $t_i \leftarrow 0$ for all $i \in N$
        \STATE $M_0 \leftarrow M$
        \WHILE{$M_0 \neq \emptyset$} \label{code:for-start}
            \STATE $i^* \leftarrow \mathrm{argmin}_{i\in N} \frac{t_i}{w_i}$ 
            \STATE $g^* \leftarrow \mathrm{argmax}_{g\in M_0} u_{i^*}(g)$
            \STATE $A_{i^*} \leftarrow A_{i^*} \cup \{g^*\}$
            \STATE $M_0 \leftarrow M_0 \setminus \{g^*\}$
            \STATE $t_{i^*} \leftarrow t_{i^*} + 1$
        \ENDWHILE \label{code:wps}
        \RETURN $(A_1,A_2,\ldots,A_n)$
    \end{algorithmic}
\end{algorithm}

Denote by $t_i(s)$ the number of times agent $i$ has picked an item up to (and including) step $s$.
For each $i\in N$ and each $k \in [t_i(m)]$, let $s^i(k)$ be the step where agent $i$ picks her $k$-th item, and denote this item by $g_{k}^{i}$; for convenience, let $s^i(0) = 0$.
Finally, let $A=(A_1,A_2,\ldots,A_n)$ be the allocation returned by Algorithm~\ref{alg:WPS-mechanism}.

\citet[Thm.~3.3]{ChakrabortyIgSu21} gave a rather long algebraic proof that Algorithm~\ref{alg:WPS-mechanism} always returns a WEF1 allocation.
\citet[Lem.~A.2]{WuZhZh23} provided an alternative proof involving integrals.
We present a relatively succinct algebraic proof of this statement via Abel's summation formula.

\begin{proposition}[\citep{ChakrabortyIgSu21}]
\label{prop:WEF1-alternative}
The allocation returned by Algorithm~\ref{alg:WPS-mechanism} is WEF1.
\end{proposition}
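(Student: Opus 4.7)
For an arbitrary pair of agents $i,j$ (with $A_j\neq\emptyset$), I would aim to verify the WEF1 inequality with the removed item being $j$'s very first pick, $g_1^j$: that is, prove $u_i(A_i)/w_i \ge u_i(A_j\setminus\{g_1^j\})/w_j$. The boundary cases are cheap. If $t_j(m)\le 1$, then $A_j\setminus\{g_1^j\}=\emptyset$ and the inequality is trivial; if $t_i(m)=0$, then a short argument using the tie-breaking at ratio $0$ forces $t_\ell(m)\in\{0,1\}$ for every $\ell$, so again every bundle becomes empty after removing one item. Henceforth set $T_i:=t_i(m)\ge 1$, $T_j:=t_j(m)\ge 2$, and $v_k:=u_i(g_k^i)$ for $k\in[T_i]$; greediness of $i$'s picks gives $v_1\ge v_2\ge\cdots\ge v_{T_i}\ge 0$.

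\noindent\textbf{Key picking-order inequality.} For each $\ell\in\{2,\ldots,T_j\}$, let $k(\ell):=\lceil w_i(\ell-1)/w_j\rceil$. When $j$ is chosen at step $s^j(\ell)$, she attains the minimum ratio, so $t_i(s^j(\ell)-1)/w_i \ge (\ell-1)/w_j$; integrality of $t_i$ upgrades this to $t_i(s^j(\ell)-1)\ge k(\ell)$. Hence $g_{k(\ell)}^i$ has already been chosen by step $s^j(\ell)-1$, i.e., $g_\ell^j$ was still available when $i$ took $g_{k(\ell)}^i$, so $u_i(g_\ell^j)\le v_{k(\ell)}$ by greediness. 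The same argument applied at $\ell=T_j$ combined with the symmetric ratio bound $(T_j-1)/w_j\le T_i/w_i$ shows $k(T_j)\le T_i$, so every index $k(\ell)$ falls in $[T_i]$.

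\noindent\textbf{Abel's step.} Setting $v_{T_i+1}:=0$ and writing $v_{k(\ell)}=\sum_{k=k(\ell)}^{T_i}(v_k-v_{k+1})$, swapping the order of summation gives
\[
\sum_{\ell=2}^{T_j} v_{k(\ell)} \;=\; \sum_{k=1}^{T_i}(v_k-v_{k+1})\,M_k, \quad\text{where } M_k:=|\{\ell:2\le\ell\le T_j,\ k(\ell)\le k\}|.
\]
Inverting the ceiling shows $k(\ell)\le k\iff \ell\le 1+w_jk/w_i$, hence $M_k\le w_jk/w_i$. Since $v_k-v_{k+1}\ge 0$, substituting this bound and then invoking Lemma~\ref{lemma:Abel_summation} with $a_k=k$ and $b_k=v_k-v_{k+1}$ telescopes the right-hand side to $(w_j/w_i)\sum_{k=1}^{T_i} v_k=(w_j/w_i)\,u_i(A_i)$. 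Combining with the picking-order inequality yields $u_i(A_j\setminus\{g_1^j\})\le(w_j/w_i)\,u_i(A_i)$, which is exactly WEF1.

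\noindent\textbf{Main obstacle.} The delicate aspect is the integer arithmetic wrapped around the ceilings: I must convert the picking condition into the \emph{sharp} inequality $t_i(s^j(\ell)-1)\ge\lceil w_i(\ell-1)/w_j\rceil$ and show that the resulting counts satisfy the clean bound $M_k\le w_jk/w_i$, while also confirming (via the end-of-process ratio comparison) that $k(T_j)\le T_i$ so that no item of $A_j\setminus\{g_1^j\}$ escapes the matching to some $v_k$. Once this bookkeeping is pinned down, Abel's formula does the rest mechanically.
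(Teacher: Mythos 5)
Your proof is correct, and the boundary cases ($t_j(m)\le 1$, and the tie-breaking argument showing $t_i(m)=0$ forces every agent to pick at most one item) are handled properly. It is best described as a dual bookkeeping of the paper's argument rather than a different method: both remove $g_1^j$, both rest on the feasibility of the picking sequence, and both close with Lemma~\ref{lemma:Abel_summation} applied to the nonincreasing values $v_1\ge v_2\ge\cdots$. The difference is in how the balance property is encoded. The paper partitions $j$'s picks into blocks between $i$'s consecutive picks, compares each $u_i(g^j_{k,\ell})$ to the \emph{most recent} pick $u_i(g^i_k)$, and derives the cumulative inequality \eqref{eq:upper_bound_upsilon_WEF1}, which makes the partial sums of $\eta_k$ nonnegative before Abel summation. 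You instead charge each item $g^j_\ell$ ($\ell\ge 2$) to the explicit earlier pick $k(\ell)=\lceil w_i(\ell-1)/w_j\rceil$; since $k(\ell)\le t_i(s^j(\ell)-1)$, your per-item bound $u_i(g^j_\ell)\le v_{k(\ell)}$ is weaker than the paper's most-recent-pick comparison, but it is exactly calibrated so that the count bound $M_k\le w_j k/w_i$ telescopes cleanly. The two feasibility statements are inversions of the same fact (counting $j$'s picks before $i$'s $k$-th pick versus $i$'s picks before $j$'s $\ell$-th pick), so neither is more general. One thing the paper's tighter block formulation buys is reusability: in Theorem~\ref{thm:WeightedPicking}, the quantitative slack $\eta_k-\theta_k$ between the actual value ratios $X^j_{k,\ell}/X_k$ and $1$ within each block is precisely what is shown to accumulate with high probability, and that asymptotic analysis needs the most-recent-pick comparison; your looser matching proves WEF1 just as cleanly but would not plug into that extension directly.
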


\begin{proof}
Consider any two distinct agents $i,j \in M$; it suffices to show that $i$ is WEF1 towards~$j$.
Observe that each of $i$ and $j$ picks her first item before the other agent picks her second item.
Let $\tau_1+1$ be the number of items that $j$ picks before step $s^i(2)$, and denote these items by $g_1^j, g_{1,1}^{j}, g_{1,2}^{j}, \ldots, g_{1,\tau_1}^{j}$. 
Note that $g_1^j$ may be picked either before or after $g_1^i$. 

For each $k \in [t_i(m)-1]\setminus\{1\}$, let $\tau_k$ be the number of items that agent $j$ picks between steps $s^i(k)$ and $s^i(k+1)$, and for $\ell\in[\tau_k]$, let $g_{k,\ell}^{j}$ be the $\ell$-th item picked by agent $j$ within this interval.
Let $\tau_{t_i(m)}$ denote the number of items that $j$ picks after step $s^i(t_i(m))$, and for $\ell \in [\tau_{t_i(m)}]$, let $g_{t_i(m),\ell}^{j}$ be the $\ell$-th item picked by $j$ within this interval.
Note that $\tau_k$ may be zero for some $k$.
See Figure~\ref{fig:item-order} for an illustration.

We claim that for every $k \in [t_i(m)]$,  
\begin{equation}\label{eq:upper_bound_upsilon_WEF1}
    \frac{\sum_{k'=1}^{k} \tau_{k'}}{w_j} \le \frac{k}{w_i}.
\end{equation}
Indeed, if $\tau_k \ge 1$, then \eqref{eq:upper_bound_upsilon_WEF1} holds since agent $j$ is allowed to pick item $g_{k,\tau_k}^j$.
If $\tau_k = 0$, for $k=1$ we have $\frac{\sum_{k'=1}^1 \tau_{k'}}{w_j} = 0 \le \frac{1}{w_i}$, while for $k \ge 2$ we have $\frac{\sum_{k'=1}^{k-1} \tau_{k'}}{w_j} \le \frac{k-1}{w_i}$ from the property for $k-1$, which implies that $\frac{\sum_{k'=1}^k \tau_{k'}}{w_j} = \frac{\sum_{k'=1}^{k-1} \tau_{k'}}{w_j} \le \frac{k-1}{w_i} < \frac{k}{w_i}$. 
Thus, \eqref{eq:upper_bound_upsilon_WEF1} holds for all $k \in [t_i(m)]$.

For each $k \in [t_i(m)]$, let $\eta_k = 1 - \frac{w_i}{w_j} \sum_{\ell=1}^{\tau_{k}} \frac{u_i(g_{k,\ell}^{j})}{u_i(g_{k}^{i})}$ when $\tau_k \ge 1$, and $\eta_k = 1$ when $\tau_k = 0$.
Since $u_i(g_{k}^{i}) \ge u_i(g_{k,\ell}^{j})$ for all $\ell\in[\tau_k]$, we have $\eta_k \ge 1 - \frac{w_i}{w_j}\cdot\tau_{k}$ for all $k$.
Hence, \eqref{eq:upper_bound_upsilon_WEF1} implies that
$\sum_{k'=1}^{k} \eta_{k'} 
    \ge 
    k - \frac{w_i}{w_j} \sum_{k'=1}^{k} \tau_{k'} \ge 0$ for all $k \in [t_i(m)]$.
This means that
\begin{align*}\label{eq:proof_of_WEF1}
    &u_i(A_i) - \frac{w_i}{w_j}\cdot u_i(A_j \setminus \{g_{1}^{j}\}) \nonumber\\
    &=  
        \sum_{k=1}^{t_i(m)} u_i(g_{k}^{i})  - \frac{w_i}{w_j}\sum_{k=1}^{t_i(m)} \sum_{\ell=1}^{\tau_{k}} u_i(g_{k,\ell}^{j}) 
        =  
        \sum_{k=1}^{t_i(m)} u_i(g_{k}^{i}) \cdot \eta_k \nonumber \\
        &= u_i(g_{t_i(m)}^{i}) \sum_{k=1}^{t_i(m)} \eta_k \\
        &\quad+ \sum_{k=1}^{t_i(m)-1} \left[\left(u_i(g_{k}^{i}) - u_i(g_{k+1}^{i}) \right)
    \sum_{k'=1}^{k} \eta_{k'}\right] \ge 0,
\end{align*}
where the last equality follows from Lemma~\ref{lemma:Abel_summation} and the inequality from the fact that $u_i(g^i_k) \ge u_i(g^i_{k+1})$ for every $k$.
Hence, agent~$i$ is WEF1 towards agent~$j$, as desired. 
\end{proof}

\subsection{Asymptotic Result}
We now present the main result of this section.
\begin{theorem}\label{thm:WeightedPicking}
    Suppose that $\mathcal{D}$ is PDF-bounded, and let $C\ge 1$ be an arbitrary constant.
    For any weight vector $(w_1,w_2,\ldots , w_n)$ such that $w_{\max}/w_{\min} \le C$, if $m = \Omega\left(n \log n /\log \log n \right)$, then Algorithm~\ref{alg:WPS-mechanism} produces a WEF allocation with high probability. 
\end{theorem}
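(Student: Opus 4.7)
The plan is to build directly on the identity proved in Proposition~\ref{prop:WEF1-alternative}. That proof in fact establishes
$$u_i(A_i) - \frac{w_i}{w_j}\,u_i(A_j) \;=\; \sum_{k=1}^{t_i(m)} u_i(g_k^i)\,\eta_k \;-\; \frac{w_i}{w_j}\,u_i(g_1^j),$$
and since $u_i(g_1^j)\le 1$ and $w_i/w_j\le C$, the WEF conclusion follows once, for every fixed ordered pair of distinct agents $i,j\in N$, the first sum exceeds $C$ except on an event of probability $o(1/n^2)$; a union bound over the $n(n-1)$ ordered pairs then yields WEF with high probability.

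I would first collect the deterministic consequences of the picking rule together with $w_{\max}/w_{\min}\le C$: each agent picks $t_i(m)=\Theta(m/n)$ items, and each $\tau_k\le \lceil w_j/w_i\rceil + 1 = O(C)$, so the picks of agents $i$ and $j$ interleave only $O(C)$-deep. Using $\sum_{k}u_i(g_k^i)\eta_k = u_i(A_i) - (w_i/w_j)\,u_i(A_j\setminus\{g_1^j\})$, I would lower-bound the first term and upper-bound the second via PDF-boundedness of $\mathcal{D}$. For $u_i(A_i)$, each $g_k^i$ is $i$'s favorite among the $\Omega(m-kn)$ items surviving at step $s^i(k)$, so the extreme-value behavior of PDF-bounded samples makes $u_i(g_k^i)$ concentrate sharply near $1$, and summing over $k$ yields $u_i(A_i)\ge t_i(m)(1-o(1))$ off a probability-$o(1/n^2)$ event. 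For $u_i(A_j\setminus\{g_1^j\})$, each $g_{k,\ell}^j$ is chosen by $j$ using $u_j$-values alone, so conditional on the surviving candidate pool its $u_i$-value is a fresh $\mathcal{D}$-sample of mean $\mu<1$; Lemma~\ref{lemma:Chernoff} then yields $u_i(A_j\setminus\{g_1^j\})\le t_j(m)(\mu+o(1))$ with the required probability. Combining these two bounds with the picking-rule identity $t_i(m)/w_i = t_j(m)/w_j+O(1/w_{\min})$ gives
$$\sum_{k=1}^{t_i(m)}u_i(g_k^i)\,\eta_k \;\ge\; t_i(m)\bigl(1-\mu-o(1)\bigr)\;=\;\Omega(m/n),$$
which exceeds $C$ once $m=\Omega(n\log n/\log\log n)$.

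The main obstacle is calibrating the concentration so that the per-pair failure probability is $o(1/n^2)$ at precisely the threshold $m/n=\Omega(\log n/\log\log n)$, since a naive Chernoff bound on $u_i(A_i)$ only reaches the coarser regime $m/n=\Omega(\log n)$. One has to import the finer extreme-value argument from \citet{ManurangsiSu21}, which shows that for small $k$ the gap $1-u_i(g_k^i)$ is at most $O(\log n/(m-kn))$ with probability $1-o(1/n^3)$, so that the first few rounds alone provide enough buffer to beat $C$. Because the bounded weight ratio perturbs the number of surviving items and the interleaving depth only by factors depending on $C$, adapting this extreme-value argument to the weighted picking sequence is essentially bookkeeping: one tracks that when $k=O(\log n/\log\log n)$ at least $m-O(kn)=\Omega(m)$ items still remain, and that $j$'s interspersed picks contribute only $O(C)$ per round to the competing sum, so the comparison with the unweighted analysis of \citet{ManurangsiSu21} goes through with constants blown up by a factor $\mathrm{poly}(C)$.
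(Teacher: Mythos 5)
Your skeleton matches the paper's at the top level: both start from the Abel-summation identity behind Proposition~\ref{prop:WEF1-alternative}, reduce WEF for an ordered pair $(i,j)$ to showing that $\sum_k u_i(g_k^i)\,\eta_k$ exceeds $\frac{w_i}{w_j}u_i(g_1^j)\le C$, and finish with a union bound over pairs. The gap is in the concentration step --- exactly the step you flag as ``the main obstacle'' but then resolve incorrectly. Your claim that \Cref{lemma:Chernoff} yields $u_i(A_j\setminus\{g_1^j\})\le t_j(m)(\mu+o(1))$ with per-pair failure probability $o(1/n^2)$ fails at the threshold $m=\Theta(n\log n/\log\log n)$: that sum has only $t_j(m)=\Theta(\log n/\log\log n)$ terms, so forcing the Chernoff exponent $\delta^2\E[X]/(2+\delta)$ up to $3\log n$ requires a relative deviation $\delta=\Omega(\log\log n)$, at which point the resulting bound exceeds the trivial bound $t_j(m)$ and says nothing. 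Moreover, your proposed repair targets the wrong quantity: sharpening the extreme-value estimate on $1-u_i(g_k^i)$ controls only agent $i$'s side, where the argument already works (the maximum of $\Omega(m)$ PDF-bounded samples concentrates near $1$ with polynomially small failure probability), and the deterministic inequality \eqref{eq:upper_bound_upsilon_WEF1} gives $\sum_{k'\le k}\eta_{k'}\ge 0$ no matter how close $i$'s picks are to $1$. The buffer must come from $j$'s picks being strictly worse for $i$ than $i$'s own picks, i.e., from the terms $1-X_{k,\ell}^j/X_k$, and that is precisely where your Chernoff bound is too weak.

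The paper resolves this differently in two respects. First, it never attempts $u_i(A_j)\le t_j(m)(\mu+o(1))$; it only needs the slack $\sum_{k\le T}\sum_{\ell}\bigl(1-X_{k,\ell}^j/X_k\bigr)$ to exceed the constant $2$ together with $X_T\ge \frac12$ (Lemma~\ref{lem:existence_of_T}(b),(c)), which already dominates the single loss term $\frac{w_i}{w_j}X_1^j\le\frac{w_i}{w_j}$. Second, and crucially, the probability that even this constant slack fails to accumulate over $r=\Theta(\log m/\log\log m)$ of $j$'s picks is bounded not by Chernoff but by Lemma~\ref{lemma:r_c_d}: since each ratio $X_{k,\ell}^j/X_k$ is an independent $(\alpha/\beta,\beta/\alpha)$-PDF-bounded sample (Lemmas~\ref{lemma:picking-generation} and~\ref{lemma:alpha_over_beta}), the probability that their sum lands within $2$ of its maximum $r$ is $(O(\log\log m/\log m))^{\Omega(r)}=O(1/m^3)$, a superexponentially small bound in $r$, whereas Chernoff-type bounds give only $e^{-O(r)}=n^{-o(1)}$. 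Without importing this specific anti-concentration tool (or an equivalent), your argument establishes the theorem only in the coarser regime $m=\Omega(n\log n)$.
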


Before we prove \Cref{thm:WeightedPicking}, we introduce some notation.
For any $c \in (0, 1]$, denote by $\mathcal{D}_{\le c}$ the conditional distribution of $\mathcal{D}$ on $[0,c]$.
For any positive integer $k$, denote by $\mathcal{D}^{\max(k)}$ the distribution of the maximum of $k$ independent random variables generated by $\mathcal{D}$.

Fix any two distinct agents $i,j\in N$.
We will use the same notation as in \Cref{sec:WEF1-alternative}.
Let $X_{1}^{j} = u_i(g_{1}^{j})$.
For each $k \in [t_i(m)]$ and $\ell \in [\tau_k]$,
let $X_{k} = u_i(g_{k}^i)$ and $X_{k,\ell}^{j} = u_i(g_{k,\ell}^{j})$.
Lemma A.1 of \citet{ManurangsiSu21} yields the following lemma, which gives a convenient description of the distributions of $X_k$ and $X_{k, \ell}^j$.

\begin{lemma}[\citep{ManurangsiSu21}]
\label{lemma:picking-generation}
    Let $X_0 = 1$. 
    Then, $(X_k)_{k\in [t_i(m)]}$ and $(X_{k,\ell}^{j})_{k\in [t_i(m)],\, \ell \in [\tau_k]}$ can be generated according to the following process:
    \begin{itemize}
        \item For each $k \in[t_i(m)]$, let
        $X_{k} \sim \mathcal{D}_{\le X_{k-1}}^{\max(m - s^i(k) + 1)}$;
        \begin{itemize}
            \item For each $\ell \in[\tau_k]$, let
            $X_{k,\ell}^{j} \sim \mathcal{D}_{\le X_{k}}$.
        \end{itemize}
    \end{itemize}
\end{lemma}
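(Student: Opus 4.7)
The plan is to derive \Cref{lemma:picking-generation} from the cited Lemma~A.1 of \citet{ManurangsiSu21}, which establishes exactly this kind of distributional characterization for an arbitrary fixed \emph{picking sequence}. The key observation that legitimizes the reduction is that the order in which agents take turns in Algorithm~\ref{alg:WPS-mechanism} is determined by the weights $w_1,\ldots,w_n$ (together with the tie-breaking rule) alone, not by the realized utilities; hence the positions $s^i(1)<s^i(2)<\cdots$ of agent~$i$ and the intermediate positions of agent~$j$ play exactly the roles required by the cited lemma.

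For completeness I would also include the short deferred-decisions argument that underlies the cited result. Initially the $mn$ utilities are mutually independent with marginal $\mathcal{D}$, and we reveal them only on demand. Just before step $s^i(k)$, conditional on the history so far, the surviving set $S$ has $|S|=m-s^i(k)+1$ items, and the utilities $\{u_i(g)\}_{g\in S}$ are conditionally i.i.d.\ from $\mathcal{D}$ truncated to $[0,X_{k-1}]$: the only constraint the history imposes on these values is the cap $X_{k-1}$ inherited from agent~$i$'s previous pick. Agent~$i$'s next pick is then the maximum of these utilities, yielding $X_k \sim \mathcal{D}_{\le X_{k-1}}^{\max(m-s^i(k)+1)}$. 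Similarly, when agent~$j$ picks $g^{j}_{k,\ell}$ between $s^i(k)$ and $s^i(k+1)$, her choice depends on $j$'s own utilities (independent of $u_i$), so conditioning on the \emph{identity} of $g^{j}_{k,\ell}$ does not bias its $u_i$-value beyond the cap $X_k$; hence $X_{k,\ell}^{j}\sim\mathcal{D}_{\le X_k}$.

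The main (and really only) delicate point is verifying that each piece of history revealed between consecutive picks of agent~$i$ leaves the conditional law of $\{u_i(g)\}_{g\in S}$ unchanged apart from the cap. Two facts make this routine. First, the schedule of which agent picks when is utility-independent, so the schedule itself conveys no information about $u_i$. Second, an item can leave $S$ only by being picked by agent~$i$ (which reveals the exact $u_i$-value of the departed item but nothing about the survivors beyond the updated cap) or by some other agent, whose decision rule depends on utilities independent of $u_i$. Once this conditional independence is checked, both displayed distributional identities drop out immediately, completing the reduction to \citet{ManurangsiSu21}.
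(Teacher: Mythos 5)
Your proposal is correct and matches the paper's treatment: the paper gives no formal proof, simply invoking Lemma~A.1 of \citet{ManurangsiSu21} together with the same intuition (the picking order is utility-independent, items surviving agent $i$'s $k$-th pick are conditionally i.i.d.\ from $\mathcal{D}$ truncated at $X_{k-1}$, and agent $j$'s choices are independent of $u_i$). Your deferred-decisions sketch is a faithful, slightly more explicit version of that same argument, so there is nothing to flag.
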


Intuitively, before agent~$i$ picks the item $g_k^i$ corresponding to $X_k$, there are $m - s^i(k)+1$ items remaining, and $i$'s utility for each of them cannot exceed her utility for the item $g_{k-1}^i$ corresponding to $X_{k-1}$.
Moreover, $i$'s utility for each item $g^j_{k,\ell}$ corresponding to $X^j_{k,\ell}$ picked by $j$ cannot exceed her utility for $g_k^i$, but otherwise $j$'s picks are independent of $i$'s valuations.

We now proceed to the proof of Theorem~\ref{thm:WeightedPicking}.
\begin{proof}[Proof of Theorem~\ref{thm:WeightedPicking}]
Suppose that $\mathcal{D}$ is $(\alpha,\beta)$-PDF-bounded and $m \ge 10^6 \tilde{\beta} \cdot C \cdot n \log n / \log \log n$, where $\tilde{\beta} := \beta/\alpha \ge 1$. 

Recall that for any $k\in [t_i(m)]$, we defined 
\begin{align*}
\eta_k = 1 - \frac{w_i}{w_j} \sum_{\ell=1}^{\tau_{k}} \frac{u_i(g_{k,\ell}^{j})}{u_i(g_{k}^{i})} = 1 - \frac{w_i}{w_j} \sum_{\ell=1}^{\tau_{k}} \frac{X_{k,\ell}^{j}}{X_{k}}.
\end{align*}
For each $k$, let 
$
\theta_k = 1 - \frac{w_i}{w_j} \cdot \tau_{k}.
$
By description of the algorithm, $X_k \ge X_{k,\ell}^{j}$ holds for all $k\in [t_i(m)]$ and $\ell\in[\tau_k]$, so $\eta_k \ge \theta_k$ for all $k$. 
From~\eqref{eq:upper_bound_upsilon_WEF1}, we obtain that for any $k$,
\begin{equation}\label{eq:cumulative_sum_T}
    \sum_{k'=1}^{k} \theta_{k'} = k - \frac{w_i}{w_j} \sum_{k'=1}^{k} \tau_{k'} \ge 0.
\end{equation} 
By Lemma~\ref{lemma:Abel_summation} combined with \eqref{eq:cumulative_sum_T} and the fact that $X_{k} \ge X_{k+1}$ for every $k$, we have 
\begin{align*}
&\sum_{k=1}^{t_i(m)} X_k \cdot \theta_k \\
&= X_{t_i(m)} \sum_{k=1}^{t_i(m)} \theta_k  + \sum_{k=1}^{t_i(m)-1} \left(X_{k} - X_{k+1} \right) \sum_{k'=1}^{k} \theta_{k'}\ge 0.
\end{align*}
From this, we get
\begin{align*}
    u_i(A_i) - \frac{w_i}{w_j}\cdot u_i(A_j) 
    &=
        \sum_{k=1}^{t_i(m)} X_{k}\cdot \eta_k - \frac{w_i}{w_j}\cdot X_1^j \\
    &\ge 
        \sum_{k=1}^{t_i(m)} X_{k}\cdot \eta_k - \frac{w_i}{w_j} \tag{by $X_1^j\le 1$}\\
    &\ge 
        \sum_{k=1}^{t_i(m)} X_{k} \left(\eta_k - \theta_k\right)  - \frac{w_i}{w_j}.
\end{align*}

Our goal is to demonstrate that, with high probability, $\sum_{k=1}^{t_i(m)} X_{k} \left(\eta_k - \theta_k\right)  - \frac{w_i}{w_j} \ge 0$ holds.
To this end, we show that with high probability, there exists a point in the picking sequence where agent $i$ receives an item of sufficiently high value and has accumulated enough advantage over agent $j$.
This is formalized in the following lemma.

\begin{lemma}\label{lem:existence_of_T}
    Suppose that $\mathcal{D}$ is $(\alpha,\beta)$-PDF-bounded, and $m \geq 10^6 \tilde{\beta} \cdot C \cdot n \log n/\log \log n$. 
    For any distinct agents $i,j\in N$, there exists a positive integer $T$ such that 
    \begin{enumerate}
        \item[$(a)$] 
            $t_i(m) \ge T \ge 1$;
        \item[$(b)$] 
            $\mathrm{Pr}\left[X_T \ge \frac{1}{2} \right] = 1-O\left(\frac{1}{m^3}\right)$; and
        \item[$(c)$]      
            $\mathrm{Pr}\left[\sum_{k=1}^{T}  \sum_{\ell=1}^{\tau_{k}}\left(1-  \frac{X_{k,\ell}^{j}}{X_{k}} \right)\ge 2 \right] = 1-O\left(\frac{1}{m^3}\right)$.
    \end{enumerate} 
\end{lemma}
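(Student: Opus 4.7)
My plan is to exhibit a deterministic integer $T$ (depending on the weights, $n$, and $m$) and verify the three conditions separately. A natural choice is $T := \lfloor m w_i/(4W) \rfloor$; since $w_i/W \ge 1/(Cn)$ by the weight-ratio bound, this gives $T = \Omega(\tilde\beta \log n / \log\log n)$. Condition (a) is then immediate from the picking-sequence rule, which guarantees $t_i(m) \ge m w_i/W - 1 \ge 2T$.

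For condition (b), I would proceed by induction on $k$ using Lemma~\ref{lemma:picking-generation}. Given $X_{k-1}$, the variable $X_k$ is the maximum of $m_k = m - s^i(k) + 1$ independent samples from $\mathcal{D}_{\le X_{k-1}}$, and PDF-boundedness yields $\Pr[Y < X_{k-1} - \varepsilon \mid X_{k-1}] \le 1 - \alpha\varepsilon$ for any single sample $Y$. Hence $\Pr[X_k < X_{k-1} - \varepsilon \mid X_{k-1}] \le \exp(-\alpha\varepsilon m_k)$. Taking $\varepsilon = 1/(2T)$ and union-bounding over $k \in [T]$ (noting that $m_k \ge m/2$ since $s^i(T) \le TW/w_i + n \le m/2$), the probability that some step drops $X$ by more than $\varepsilon$ is at most $T\exp(-\alpha m/(4T))$, which is $O(1/m^3)$ under the hypothesis on $m$. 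On the complementary event, $X_T \ge X_0 - T\varepsilon = 1/2$.

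For condition (c), I condition on the event from (b). The picking sequence is deterministic given the weights, so $N := \sum_{k=1}^T \tau_k$ is deterministic and satisfies $N \ge T w_j/w_i - 1 \ge T/C - 1$. Conditional on the $X_k$'s (with $X_k \ge 1/2$), the variables $Y_{k,\ell} := 1 - X_{k,\ell}^j / X_k$ are independent $[0,1]$-valued random variables. A short integration-by-parts computation using $F_\mathcal{D}(y) \ge \alpha y$ and $F_\mathcal{D}(y) \le \min(\beta y, 1)$ yields $\mathbb{E}[Y_{k,\ell} \mid X_k] \ge 1/(4\tilde\beta)$ uniformly on $\{X_k \ge 1/2\}$, so $\mathbb{E}[S \mid X] \ge N/(4\tilde\beta) = \Omega(\log m)$ by our choice of $T$. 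The Chernoff bound from Lemma~\ref{lemma:Chernoff} applied conditionally (with $\delta$ close to $1$) then gives $\Pr[S < 2 \mid X] \le \exp(-\Theta(\mathbb{E}[S \mid X])) = O(1/m^3)$, as required.

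The main obstacle will be the tight interplay between (a) and (c): the naive Chernoff analysis demands $T = \Omega(\tilde\beta C \log m)$, while (a) permits only $T \le t_i(m) = O(\tilde\beta \log n / \log\log n)$. Reconciling these hinges on the hypothesis $m \ge 10^6 \tilde\beta C n \log n / \log\log n$, possibly combined with a refinement of the conditional expectation lower bound that exploits the fact that $X_k$ is actually very close to $1$ (not merely $\ge 1/2$) for most $k$ in the relevant window, so as to shave the residual $\log\log n$ factor from the tail estimate.
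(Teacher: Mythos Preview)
Your plan for (a) and (b) is fine and essentially parallel to the paper's, modulo a different choice of $T$ (the paper takes the much smaller $T = 100\lceil \tilde\beta (w_i/w_j)\log m/\log\log m\rceil$, but your larger $T = \lfloor m w_i/(4W)\rfloor$ also works for these two parts). One minor slip: the bound on $s^i(T)$ should read $s^i(T) \le TW/w_i + W/w_{\min} \le TW/w_i + Cn$, not $+\,n$; this is harmless since $Cn \ll m$ under the hypothesis.

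The real gap is in (c), and your proposed remedy does not close it. You correctly observe that the Chernoff route needs $\mathbb{E}[S\mid X] = \Omega(\log m)$ to deliver a $O(1/m^3)$ tail, whereas at the threshold $m = \Theta(n\log n/\log\log n)$ you can only guarantee $N = \sum_k \tau_k = \Theta(\log m/\log\log m)$ and hence $\mathbb{E}[S\mid X] = \Theta(\log m/\log\log m)$. But the fix you sketch---exploiting that $X_k$ is close to $1$ rather than merely $\ge 1/2$---cannot recover the missing $\log\log m$ factor: by Lemma~\ref{lemma:alpha_over_beta}, the conditional law of $X_{k,\ell}^j/X_k$ given $X_k$ is $(\alpha/\beta,\beta/\alpha)$-PDF-bounded \emph{regardless} of the value of $X_k$, so neither the conditional mean of $Y_{k,\ell}$ nor the Chernoff exponent improves as $X_k\to 1$. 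The deficit is structural, in the \emph{form} of the tail bound, not in its constants.

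What the paper does is abandon Chernoff altogether for this step and invoke the sharper bound of Lemma~\ref{lemma:r_c_d} (due to \citet{ManurangsiSu21}): for $N$ independent $[0,1]$-valued variables with PDF bounded above by $\tilde\beta$, one has $\Pr\bigl[\sum Z_i \ge N-2\bigr] \le (c\tilde\beta/N)^{N/2}$ for an absolute constant $c$. The point is that for the sum to land within $2$ of its maximum, all but $O(1)$ of the $N$ terms must lie within $O(1/N)$ of $1$, and PDF-boundedness makes each such event cost a factor $O(\tilde\beta/N)$. Taking logarithms gives an exponent of order $N\log N$ rather than $N$; with $N = \Theta(\log m/\log\log m)$ and $\log N = \Theta(\log\log m)$, this is exactly $\Theta(\log m)$, yielding the required $m^{-\Theta(1)}$. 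Once you have this lemma, either your $T$ or the paper's smaller $T$ works; the paper's choice is simply the smallest $T$ for which $N$ is large enough to make the computation go through.
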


We defer the proof of Lemma~\ref{lem:existence_of_T} to Appendix~\ref{appendix:proof_existence_of_T}.
Using this lemma, we will show that with probability at least $1-O\left(1/m^3\right)$, agent $i$ does not envy agent $j$.
We have
\begin{align*}
    u_i(A_i) - \frac{w_i}{w_j}u_i(A_j)
    &\ge 
        \sum_{k=1}^{t_i(m)} X_{k} \left(\eta_k - \theta_k\right)  - \frac{w_i}{w_j} \\
    &\ge 
        \sum_{k=1}^{T} X_{k}\left(\eta_k - \theta_k\right)  - \frac{w_i}{w_j} 
        \tag{by $X_{k}\ge 0$ and $\eta_k \ge \theta_k $ for any $k$, and (a) in Lemma~\ref{lem:existence_of_T}}\\
    &\ge 
        X_{T}\sum_{k=1}^{T}  \left(\eta_k - \theta_k\right)  - \frac{w_i}{w_j} 
        \tag{by $X_1\ge X_2\ge \cdots \ge X_T$}\\
    &= 
        X_T \frac{w_i}{w_j} \sum_{k=1}^{T}  \sum_{\ell=1}^{\tau_{k}}\left(1-  \frac{X_{k,\ell}^{j}}{X_{k}} \right) - \frac{w_i}{w_j} \\
    &\ge 
        \frac{w_i}{2 w_j} \sum_{k=1}^{T}  \sum_{\ell=1}^{\tau_{k}}\left(1-  \frac{X_{k,\ell}^{j}}{X_{k}} \right) - \frac{w_i}{w_j} \tag{with probability at least $1-O\left(1/m^3\right)$, by (b) in Lemma~\ref{lem:existence_of_T}}\\
    &\ge 
        \frac{w_i}{2 w_j}\cdot  2 - \frac{w_i}{w_j} \tag{with probability at least $1-O\left(1/m^3\right)$, by (c) in Lemma~\ref{lem:existence_of_T}}\\
    &=
        0.
\end{align*}
Finally, by the union bound over all pairs of agents $i,j$ and the fact that $m\ge n$, we have that the probability that the allocation~$A$ is \emph{not} WEF is at most $n^2\cdot O(1/m^3) = O(1/n)$.
This completes the proof of the theorem.
\end{proof}

\section{Weighted Proportionality}
\label{sec:WPROP}

In this section, we turn our attention to weighted proportionality, and establish a sharp threshold for the transition from non-existence to existence with respect to this notion.

We start with the non-existence.

\begin{theorem}\label{thm:WPROP_non_existence}
    Suppose that $\mathcal{D}$ is a non-atomic distribution with mean $\mu\in (0,1)$, and let $\varepsilon \in (0, 1/2)$ be a  constant. 
    For any $n$, 
    there exists a weight vector $(w_1,w_2,\ldots , w_n)$ with $\frac{w_{\max}}{w_{\min}} \le \frac{2}{(1-\mu)^2\varepsilon^2}$ such that for any $m \le (1 - \varepsilon) \cdot \frac{n}{1-\mu} $, with high probability, no WPROP allocation exists.
\end{theorem}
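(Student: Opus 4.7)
The plan is to exhibit a single weight vector of ratio at most $2/((1-\mu)^2\varepsilon^2)$ and then, assuming a WPROP allocation $A$ exists, sum elementary lower bounds on the $|A_i|$ to exceed $m$ with high probability. Concretely, let $k := \lceil \varepsilon n / 2 \rceil$ and $C := 2/((1-\mu)^2\varepsilon^2)$, and define $w_1 = \cdots = w_k = C$, $w_{k+1} = \cdots = w_n = 1$. Then $w_{\max}/w_{\min} = C$ satisfies the bound, and a direct computation gives $kC = n/((1-\mu)^2\varepsilon)$, so the heavy group's combined share $\Phi_h := kC/W$ satisfies $1 - \Phi_h \le (n-k)/(kC) \le (1-\mu)^2\varepsilon$.

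In any WPROP allocation $A$, two elementary lower bounds combine to give $|A_i| \ge \max\bigl(1,\, (w_i/W)\, u_i(M)\bigr)$: the first because $u_i(M) > 0$ almost surely (by non-atomicity), so WPROP forces $A_i \ne \emptyset$; the second because each item has value at most $1$, so $u_i(A_i) \le |A_i|$ and WPROP gives $|A_i| \ge (w_i/W)\, u_i(M)$. Applying the Chernoff bound (\Cref{lemma:Chernoff}) to $u_i(M) = \sum_{g \in M} u_i(g)$ and union-bounding over the $n$ agents yields, with high probability, $u_i(M) \ge (1-\delta) m\mu$ for every $i$, for some $\delta = o(1)$.

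It suffices to treat the regime $m \ge n$ (otherwise some agent is forced to get no items, immediately violating WPROP); then $\mu \ge \varepsilon$, and a short calculation shows $(C/W)\, m\mu \ge \mu/\varepsilon \ge 1$ while $(1/W)\, m\mu \le \mu(1-\mu)\varepsilon < 1$. Keeping the fractional bound for heavy agents and the unit bound for light agents,
\[
m = \sum_{i=1}^n |A_i| \;\ge\; \sum_{i \le k} \frac{w_i}{W}\, u_i(M) + (n-k) \;\ge\; \Phi_h(1-\delta)\, m\mu + (n-k),
\]
which rearranges to $m \ge (n-k)/(1 - \Phi_h(1-\delta)\mu)$. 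Plugging in $\Phi_h \ge 1 - (1-\mu)^2\varepsilon$, $n - k \ge (1-\varepsilon/2)n - 1$, and $\delta = o(1)$, this simplifies asymptotically to $m \ge (1-\varepsilon/2)\, n / \bigl((1-\mu)(1 + \mu(1-\mu)\varepsilon)\bigr)$. The inequality needed to beat $(1-\varepsilon)\, n/(1-\mu)$ reduces to $\mu(1-\mu)(1-\varepsilon) < 1/2$, which always holds since $\mu(1-\mu) \le 1/4$. This contradicts $m \le (1-\varepsilon)\, n/(1-\mu)$, so no WPROP allocation exists with high probability.

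The main technical hurdle will be calibrating $k$ against the weight-ratio constraint so that the $\Theta(\varepsilon)$ margin in the final deterministic inequality $(1 - \varepsilon/2) > (1 - \varepsilon)(1 + \mu(1-\mu)\varepsilon)$ comfortably absorbs both the Chernoff error $\delta$ and the gap $1 - \Phi_h \le (1-\mu)^2\varepsilon$; the choice $k = \lceil \varepsilon n/2 \rceil$ (rather than $\lceil \varepsilon n\rceil$) is what makes the argument go through for every $\mu \in (0,1)$ and $\varepsilon \in (0, 1/2)$.
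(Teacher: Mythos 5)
Your proposal is correct and follows essentially the same route as the paper's proof: a two-tier weight vector in which a small group of heavy agents holds almost all of the total weight, a Chernoff-plus-union bound giving $u_i(M)\ge(1-\delta)m\mu$ for all agents, the observation that light agents each need at least one item while heavy agents need at least $\frac{w_i}{W}u_i(M)$ items (since item values are at most $1$), and a count of the required items exceeding $m$. The only differences are cosmetic parametrizations---the paper fixes the group weight shares via $\delta=(1-\mu)\varepsilon$ and tunes a constant $\gamma$ to close the inequality exactly, whereas you fix $k=\lceil\varepsilon n/2\rceil$ heavy agents of ratio exactly $C$ and absorb the error terms with the slack $\mu(1-\mu)\le 1/4$---and your arithmetic (in particular the reduction to $\mu(1-\mu)(1-\varepsilon)<1/2$) checks out.
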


\begin{proof}
    When $n > m$, at least one agent receives no item. 
    Since $\mathcal{D}$ is non-atomic, with probability $1$, all agents have positive utilities for all items, in which case no WPROP allocation exists.
    We thus assume that $n \le m \le (1 - \varepsilon) \cdot \frac{n}{1-\mu}$.

    Let $\delta = (1-\mu)\varepsilon < 1/2$.
    For sufficiently large $n$, we have $\lfloor\delta n \rfloor\ge 2$.
    Assume that there are $\lceil(1-\delta) n\rceil$ agents with weight $ \delta  / \lceil(1-\delta) n\rceil$ each (called agents of the ``first type''), and the remaining $\lfloor\delta n \rfloor$ agents have weight $ (1-\delta)  / \lfloor\delta n \rfloor$ each (called agents of the ``second type'').
    Note that the sum of all weights is $\delta + (1-\delta) = 1$, and the ratio between the maximum and minimum weights is
    \begin{align*}
    \frac{w_{\max}}{w_{\min}} 
    &= \frac{1-\delta}{\lfloor\delta n \rfloor}\cdot \frac{\lceil(1-\delta) n\rceil}{\delta} \\
    &\le \frac{1}{\delta n / 2} \cdot \frac{n}{\delta} = \frac{2}{\delta^2} = \frac{2}{(1-\mu)^2 \varepsilon^2}.
    \end{align*}
    Our choice of $\delta$ implies that $(1 - \delta)(1 - \mu \eps) > 1 - \delta - \mu\eps = 1 - \eps$. Rearranging this yields $\frac{1 - \mu \eps}{1 
- \eps} - \frac{1}{1 - \delta} > 0$. Let $\gamma = \frac{1}{2}\left(\frac{1 - \mu \eps}{1 
- \eps} - \frac{1}{1 - \delta}\right)$; by the previous sentence, $\gamma > 0$.

    Note that $\mathbb{E}\left[ u_i(M) \right] = m \mu$ for each $i\in N$. 
    Applying \Cref{lemma:Chernoff}, we get 
    $$
    \mathrm{Pr}\left[u_i(M) \le (1-\gamma) m \mu \right] \le \mathrm{exp}\left({-\frac{{\gamma}^2 m \mu}{2}}\right).
    $$
    By the union bound, the probability that there exists $i\in N$ with $u_i(M) \le (1-\gamma) m \mu$ is at most $n \cdot \mathrm{exp}\left({-{\gamma}^2 m \mu / 2}\right)$.
    Since $m\ge n$,
    we have $n \cdot \mathrm{exp}\left({-{\gamma}^2 m \mu / 2}\right) \le n \cdot \mathrm{exp}\left({-{\gamma}^2 n \mu / 2}\right)$, which approaches $0$ as $n\to \infty$.
    Thus, with high probability, $u_i(M) \ge (1-\gamma) m \mu$ holds for all $i\in N$. 
    We assume for the remainder of the proof that this event holds.

    For an allocation to satisfy WPROP, each agent of the first type must receive at least one item.
    Each agent of the second type, given her weight of $\frac{1-\delta}{\lfloor\delta n \rfloor}$, must receive a bundle that she values at least $\frac{1-\delta}{\lfloor\delta n \rfloor} \cdot (1-\gamma) m \mu $.
    Since $u_i(g) \le 1$ for all $i\in N$ and $g\in M$, for each agent $i$ of the second type, her bundle $A_{i}$ must contain at least $ \frac{1-\delta}{\lfloor\delta n \rfloor} \cdot (1-\gamma) m \mu$ items.
    Therefore, for a WPROP allocation to exist, the total number of items must be at least $\lceil(1-\delta)n\rceil + \lfloor\delta n \rfloor \cdot \frac{1-\delta}{\lfloor\delta n \rfloor} \cdot (1-\gamma) m \mu \ge (1-\delta) \left(n  + (1-\gamma) m \mu \right)$.
    Since $m \le \frac{1 - \varepsilon}{1-\mu} \cdot n$, this is at least 
    \begin{align*}
    & (1-\delta)\left(\frac{1-\mu}{1-\varepsilon}  + (1-\gamma)  \mu\right) m \\
    &\geq (1-\delta)\left(\frac{1-\mu}{1-\varepsilon}  + \mu - \gamma\right) m \\
    &= (1-\delta)\left(\gamma + \frac{1}{1 - \delta}\right) m 
    > m,
    \end{align*}
    a contradiction. 
    We therefore conclude that, with high probability, no WPROP allocation exists.
\end{proof}

Next, we exhibit the asymptotic existence of weighted proportional allocations.

\begin{theorem}
\label{thm:WPROP-existence}
    Suppose that $\mathcal{D}$ is a PDF-bounded distribution with mean $\mu\in (0,1)$, and let $C \ge 1$ and $\varepsilon \in (0, 1)$ be arbitrary constants. 
    For any weight vector $(w_1,w_2,\ldots , w_n)$ with $\frac{w_{\max}}{w_{\min}} \le C$ and any $m \ge (1 + \varepsilon)\cdot \frac{n}{1-\mu}$, a WPROP allocation exists with high probability.
    Moreover, such an allocation can be found in polynomial time.
\end{theorem}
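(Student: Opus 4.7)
The plan is to construct a polynomial-time allocation algorithm and verify its correctness with high probability, using \Cref{lemma:Chernoff} at two separate stages. Write $p_i := w_i/W$, so that $1/(Cn) \le p_i \le C/n$ by the weight assumption, and let $s_i := p_i\, u_i(M)$ denote the proportional threshold for agent $i$. As a first step, I would apply \Cref{lemma:Chernoff} to $u_i(M) = \sum_{g \in M} u_i(g)$ (which has mean $m\mu$), and then union-bound over the $n$ agents, using $m \ge n$ to drive the failure probability to $o(1)$. This yields $u_i(M) \le (1 + \varepsilon/10)\, m\mu$, and hence $s_i \le (1+\varepsilon/10)\, p_i m \mu$, simultaneously for every $i$, with high probability.

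The algorithmic step assigns to each agent $i$ a bundle of roughly $k_i \approx p_i m$ items, drawn preferentially from $i$'s highest-valued items. Concretely, I would choose a target $k_i$ and a slightly larger $q_i$, build a bipartite graph in which agent $i$ is connected to their top $q_i$ items by $u_i$, and find a many-to-one matching assigning $k_i$ distinct items to each agent $i$ via Hall's theorem (using e.g.\ a standard max-flow routine for the polynomial-time construction). Because the vectors $(u_i(g))_{g \in M}$ are i.i.d.\ across agents, the top-$q_i$ sets behave like essentially independent random $q_i$-subsets of $M$, so a Chernoff/union-bound over $S \subseteq N$ verifies Hall's condition $|\bigcup_{i \in S}\mathrm{top}_{q_i}(i)| \ge \sum_{i \in S} k_i$ with high probability.

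Having fixed such a matching, I would then verify $u_i(A_i) \ge s_i$ for each $i$ by a second application of \Cref{lemma:Chernoff} to $u_i(A_i)$, which is a sum of $k_i$ bounded terms coming from the upper tail of $\mathcal{D}$. By the PDF-bounded property, every item in agent $i$'s top $q_i$ has value at least $1 - O(q_i/m)$, so the expected bundle value exceeds $s_i$ by a margin of order $\varepsilon\, p_i m\mu$ once the parameters are set correctly; this margin absorbs the concentration error, giving a per-agent failure probability $o(1/n)$. A final union bound over the $n$ agents finishes the existence proof.

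The main obstacle is calibrating the parameters $k_i, q_i$ so that the analysis works uniformly across the two extremal regimes: when $\mu \to 0$, each agent's target $k_i$ is tiny and the matching step is the bottleneck, whereas when $\mu \to 1$, each $k_i$ is large but the per-item excess over the mean is small and the value-concentration step becomes the bottleneck. The slack $m(1-\mu) \ge (1+\varepsilon) n$ provided by the hypothesis is exactly what reconciles both regimes, but turning this intuition into a rigorous Hall-type verification together with simultaneous Chernoff bounds on $u_i(M)$ and $u_i(A_i)$ will require careful parameter choice.
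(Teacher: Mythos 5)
Your outline has the same skeleton as the paper's proof: a Chernoff--union bound to control $u_i(M)$, a bipartite agent--item graph restricted to each agent's high-value items, a Hall-type argument producing a many-to-one matching giving agent $i$ some $k_i$ such items, and the conclusion $u_i(A_i)\ge k_i\tau\ge\frac{w_i}{W}u_i(M)$. The genuine gap is in the step you dispatch in one sentence: ``a Chernoff/union-bound over $S\subseteq N$ verifies Hall's condition.'' With your parameters this is precisely where the proof can fail. You take $q_i$ only slightly larger than $k_i\approx\frac{w_i}{W}\mu m=\Theta(m/n)$, but in the critical regime $m=\Theta(n)$ (e.g.\ $m=\lceil(1+\varepsilon)n/(1-\mu)\rceil$ with $\mu$ bounded away from $1$) this makes all degrees $q_i=O(1)$. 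Then the per-subset failure probabilities are only polynomially small in $m$ --- e.g.\ a pair of agents whose constant-size top sets overlap beyond the slack costs only $m^{-O(1)}$ --- and the union bound over the $\binom{n}{j}$ agent subsets does not close: the $j$-th term behaves like $\bigl(c\,q^2 n/m\bigr)^{j(q_i-k_i)}$-type expressions that need not be summable when $q_i-k_i$ and $n/m$ are constants. Whether a left-saturating matching exists at constant degree is a cuckoo-hashing-type load-threshold question, not something a routine Chernoff bound settles; your ``slightly larger $q_i$'' is exactly the delicate quantity. The paper sidesteps all of this by thresholding on \emph{value} rather than rank: with $\tau=1-\frac{8(C+1)\log m}{\alpha n}$, the graph $G_{\ge\tau}$ is an exact Erd\H{o}s--R\'enyi bipartite graph with \emph{independent} edges of probability at least $\frac{8(C+1)\log m}{n}$, i.e.\ expected degrees $\Omega(\log m)$, and \Cref{lemma:existence_of_perfect_s_matching_in_random_graph} then closes a union bound over agent-side sets $Y$ \emph{and} item-side sets $Z$. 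Note also that your fixed-rank top-$q_i$ sets make the edges incident to one agent dependent (exactly $q_i$ of them exist), so even the concentration for $\bigl|\bigcup_{i\in S}\mathrm{top}_{q_i}(i)\bigr|$ needs a negative-association argument that you do not supply; independence holds only across agents.

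Two further issues. First, your ``second application of \Cref{lemma:Chernoff} to $u_i(A_i)$'' is both problematic and unnecessary: the matching is selected using the same utility realizations, so conditional on the matching, $u_i(A_i)$ is not a sum of independent draws from the upper tail of $\mathcal{D}$. The clean route --- the one the paper takes --- is deterministic: once every matched item has value at least the threshold, $u_i(A_i)\ge s_i\tau\ge\frac{w_i}{W}(1+\delta)m\mu\ge\frac{w_i}{W}u_i(M)$, with randomness needed only for the order-statistic/threshold bound and for $u_i(M)\le(1+\delta)m\mu$. Second, you never treat large $m$: the paper first invokes \Cref{thm:WeightedPicking} to dispatch $m=\Omega(n\log n/\log\log n)$, so that in the matching argument $m=O(n\log n)$, which is what makes $\tau>0$ and $\delta>0$ well-defined. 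Your rank-based graph may adapt to large $m$ automatically (degrees scale like $m/n$), but that has to be verified, as does the counting constraint $\sum_i s_i\le m$ --- the one place where the hypothesis $m\ge(1+\varepsilon)\frac{n}{1-\mu}$ is actually consumed, and which forces the threshold $\tau$ to exceed the constant $\frac{\mu(1+\varepsilon)}{\mu+\varepsilon}$; your sketch gestures at this trade-off in the last paragraph but does not carry out the calibration.
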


    As WEF is stronger than WPROP, Theorem~\ref{thm:WeightedPicking} already implies that a WPROP allocation exists with high probability when $m =\Omega\left( n \log n /  \log \log n \right)$. 
    Hence, to prove \Cref{thm:WPROP-existence}, we may restrict our attention to the case $m=O(n \log n)$. 
    Let $\cD$ be $(\alpha,\beta)$-PDF-bounded, and let $\tau = 1- \frac{8(C + 1) \log m}{\alpha n}$ and $\delta = \left(1+\frac{\varepsilon}{1 + \varepsilon}\cdot \frac{1-\mu}{\mu} \right)\tau - 1$. Note that $\tau>0$ and $\delta > 0$ for sufficiently large $n$.

    Since $\mathbb{E}\left[ u_i(M) \right] = m \mu$ for all $i\in N$, we can apply \Cref{lemma:Chernoff}  to get that
    $
        \mathrm{Pr}\left[u_i(M) \ge (1+\delta) m \mu \right] \le \mathrm{exp}\left({-\frac{{\delta}^2 m \mu}{2+\delta}}\right).
    $
    By the union bound and the assumption that $m\ge (1 + \varepsilon)\cdot \frac{n}{1-\mu} \ge n$, with high probability, we have $u_i(M) \le (1+\delta) m \mu$ for all agents $i\in N$.
    We assume for the remainder of this discussion that this event holds.

    Recall that $W = \sum_{i\in N} w_i$.
    For each $i\in N$, define $s_i = \left\lceil(1 + \delta) \frac{w_i}{W}  \cdot \frac{\mu m}{\tau}\right\rceil$.
    If we can construct an allocation $A$ where every agent $i$ receives at least $s_i$ items that she values at least $\tau$ each, then  with high probability,
    $    u_i(A_i) \ge s_i \cdot \tau
        \ge  \frac{w_i}{W}\cdot (1 + \delta)m \mu 
        \ge \frac{w_i}{W}\cdot u_i(M)$
    for all $i\in N$, implying that the allocation $A$ satisfies WPROP. 
    
    Let $\bm{s}=(s_1,s_2,\ldots, s_n)$. As $m \ge (1+\varepsilon)\cdot\frac{n}{1-\mu}$, we have
    \begin{align*}
        \sum_{i\in N}  s_i 
        &\le  
            \sum_{i\in N} (1 + \delta) \frac{w_i}{W} \cdot \frac{\mu m}{\tau} + n \\
        &= 
            (1 + \delta) \cdot \frac{\mu m}{\tau} + n \\
        &\le 
            (1 + \delta) \cdot \frac{\mu m}{\tau} + \frac{1-\mu}{1 + \varepsilon} \cdot m \\
        &= 
            \left(1 + \frac{\varepsilon}{1 + \varepsilon} \cdot \frac{1-\mu}{\mu} \right)  \mu m + \frac{1-\mu}{1 + \varepsilon} \cdot m 
        = m.
    \end{align*}
    This calculation shows that the total number of required items does not exceed the number of available items. 
    
    To find a desired allocation, we use a matching-based approach, which extends an algorithm of \citet{ManurangsiSu20} to the weighted setting. 
    Before describing the algorithm, we define a generalized notion of a matching that allows vertices on one side to be matched multiple times.
     \begin{definition}
        Let $G=(L\cup R, E)$ be a bipartite graph with $|L| = n$ and $|R|=m$. 
        For a vector of positive integers $\bm{s}=(s_1,s_2,\ldots, s_n)$ with $\sum_{i=1}^n s_i \le m$, an $\bm{s}$-\emph{matching} in $G$ is a set of edges $F\subseteq E$ such that each vertex $i\in L$ is incident to at most $s_i$ edges in $F$ and each vertex in $R$ is incident to at most one edge in $F$.
        An $\bm{s}$-matching is called \emph{left-saturating} if every vertex $i \in L$ is incident to exactly $s_i$ edges in the matching.
    \end{definition}
    
\begin{algorithm}[h]
    \caption{Matching-Based Algorithm for WPROP}
    \label{alg:matching-based-algorithm}
    \begin{algorithmic}[1]
        \REQUIRE $N,M,(u_i(g))_{i\in N, g\in M}$, $\bm{s}=(s_i)_{i\in N}$, threshold $\tau$
        \FOR{$i\in N$}
            \STATE $M_{\ge \tau}(i) \leftarrow \{j\in M \mid u_i(j)\ge \tau\}$
        \ENDFOR
        \STATE Let $G_{\ge \tau} = (N, M, E_{\ge \tau})$ be the bipartite graph where $(i, j) \in E_{\ge \tau}$ if and only if $j \in M_{\ge \tau}(i)$.
        \IF{$G_{\ge \tau}$ contains a left-saturating $\bm{s}$-matching}
            \RETURN  any left-saturating $\bm{s}$-matching in $G_{\ge \tau}$ (with the unmatched items allocated arbitrarily)
        \ELSE
        \RETURN NULL
        \ENDIF
    \end{algorithmic}
\end{algorithm}

    Our algorithm, described as Algorithm~\ref{alg:matching-based-algorithm}, first constructs a bipartite graph $G_{\ge \tau}$ where an edge exists between an agent and an item if and only if the agent values the item at least the threshold $\tau$. 
    The algorithm then determines whether a left-saturating $\bm{s}$-matching exists in this graph.
    If so, it simply assigns each matched item to the agent matched with it, and any unmatched item can be assigned arbitrarily. 
    Note that determining whether a left-saturating $\bm{s}$-matching exists, and finding one in case it does, can be done in polynomial time by creating $s_i$ copies of each agent~$i$ and computing a maximum matching in the resulting graph.

    We now state a lemma that establishes the existence of a left-saturating $\bm{s}$-matching. 
    \begin{lemma}\label{lemma:existence_of_perfect_s_matching}
        Suppose that $\mathcal{D}$ is $(\alpha,\beta)$-PDF-bounded,
        there exists a constant $C \ge 1$ such that $\frac{w_{\max}}{w_{\min}} \le C$, and $m=O(n\log n)$.
        Set $\tau = 1- \frac{8 (C + 1) \log m}{\alpha n}$ in Algorithm~\ref{alg:matching-based-algorithm}. Then, with high probability, Algorithm~\ref{alg:matching-based-algorithm} outputs a left-saturating $\bm{s}$-matching.
    \end{lemma}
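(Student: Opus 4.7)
The plan is to verify Hall's condition for $\bm{s}$-matchings. By replacing each agent $i$ with $s_i$ identical copies and reducing to ordinary bipartite matching, the graph $G_{\ge\tau}$ admits a left-saturating $\bm{s}$-matching if and only if, for every $S \subseteq N$,
\[
\left|N_{G_{\ge\tau}}(S)\right| \;\ge\; s(S) \,:=\, \sum_{i \in S} s_i.
\]
I would show that this holds simultaneously for all $S$ with probability $1-o(1)$ via a union bound over $S$.

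The probabilistic setup is the following. Since $\cD$ is $(\alpha,\beta)$-PDF-bounded, for every agent $i$ and item $g$, $\Pr[u_i(g) \ge \tau] \ge \alpha(1-\tau) = 8(C+1)\log m/n =: p$. For any fixed $S$ with $|S|=k$, the indicators $\{\mathbf{1}[g \in N(S)]\}_{g\in M}$ are independent across $g$, each with success probability at least $1-(1-p)^k$, so $|N(S)|$ stochastically dominates $\mathrm{Bin}(m, 1-(1-p)^k)$. On the combinatorial side, $w_{\max}/w_{\min} \le C$ gives $w(S)/W \le kC/n$, so the definition of $s_i$ yields $s(S) \le (1+\delta)(Ck/n)(\mu m/\tau) + k = O(Ckm/n)$. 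Moreover, since $s_i \ge 1$ for all $i$ and $\sum_i s_i \le m$ (shown immediately before the lemma), we obtain the useful slack $m - s(S) \ge \sum_{i \notin S} s_i \ge n - k$.

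The union bound splits into two regimes. In the small-$k$ regime ($k \le n/(8(C+1))$, so $pk \le \log m$ and $\mathbb{E}[|N(S)|] \ge mpk/2$), the bound $s(S) = O(Ckm/n) = O(mpk/\log m)$ yields $s(S) \le \mathbb{E}[|N(S)|]/2$ for large $n$, and the lower-tail Chernoff bound from \Cref{lemma:Chernoff} gives $\Pr[|N(S)| < s(S)] \le \exp(-\mathbb{E}[|N(S)|]/8) \le \exp(-mpk/16)$. Combining $m \ge (1+\varepsilon)n/(1-\mu)$ with the constant $8(C+1)$ in $\tau$ ensures $mp/16 \ge (1+c)\log n$ for some constant $c > 0$, so $\binom{n}{k}\exp(-mpk/16) \le n^{-ck}$ and summing over $k$ yields $O(n^{-c}) = o(1)$. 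In the large-$k$ regime ($k > n/(8(C+1))$, where $pk > \log m$ and $\mathbb{E}[m-|N(S)|] \le me^{-pk} < 1$), writing $j := n-k \ge 1$ and applying the upper-tail Chernoff bound $\Pr[Y \ge t] \le (e\mathbb{E}[Y]/t)^t$ with $t = j$ gives, for any fixed $S$ of size $n-j$, $\Pr[|N(S)| < s(S)] \le (em e^{-p(n-j)}/j)^j$. Combined with $\binom{n}{k} \le (en/j)^j$, the contribution over such $S$ becomes $(e^2 n m e^{-p(n-j)}/j^2)^j$, whose base is $o(1)$ for every $j \ge 1$ because $pn = 8(C+1)\log m$ overwhelms all polynomial-in-$n,m$ factors when $m = O(n\log n)$; summing over $j$ yields $o(1)$.

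The main obstacle is the boundary case $k = n$, where the slack $m - s(N)$ may vanish entirely (when the ceilings in the definition of $s_i$ accumulate to exactly $m$). Here the above arguments do not apply directly, and we instead need $|N(N)| = m$, meaning every item must be valued at least $\tau$ by some agent. This fails with probability at most $m(1-p)^n \le m \cdot e^{-pn} = m^{1-8(C+1)} = o(1)$ for $C \ge 1$. Once Hall's condition is verified with high probability, the correctness of Algorithm~\ref{alg:matching-based-algorithm} follows, and the polynomial runtime comes from the vertex-copying reduction to ordinary bipartite matching noted just before the lemma.
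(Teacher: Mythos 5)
Your overall strategy is the same as the paper's: reduce to Hall's condition for $\bm{s}$-matchings via agent copies (this is exactly \Cref{proposition:a_perfect_s-matching}), observe that $G_{\ge\tau}$ is an Erd\H{o}s--R\'enyi-type bipartite graph with edge probability at least $p = 8(C+1)\log m/n$, and union-bound over violating sets. The paper executes the union bound differently--- in \Cref{lemma:existence_of_perfect_s_matching_in_random_graph} it sums over pairs $(Y,Z)$ with $Z\supseteq N_G(Y)$, $|Z| = \sum_{i\in Y}s_i - 1$, and bounds $\Pr[N_G(Y)\subseteq Z] = (1-p)^{k(m-\sum_{i\in Y}s_i+1)}$ in one unified computation, with no regime split; the exponent $m-\sum_{i\in Y}s_i+1\ge 1$ automatically covers the boundary case $k=n$, $s(N)=m$, which you correctly identified as problematic for your tail-bound formulation and patched separately (a genuine catch, handled correctly). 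Your large-$k$ regime is also sound, though the claim that the base $e^2 nm e^{-p(n-j)}/j^2$ is $o(1)$ for every $j\ge 1$ is not automatic from ``$pn$ overwhelms polynomial factors'': for $j=\Theta(n)$ at the regime boundary one only has $me^{-pk}<1$ and must rely on the $1/j^2$ factor, while for small $j$ one relies on $e^{-pk}\le m^{-\Omega(C)}$; a short case split is needed.

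The genuine gap is in the small-$k$ regime. You assert $\E[|N(S)|]\ge mpk/2$, but $\E[|N(S)|] = m\bigl(1-(1-q)^k\bigr)\le m$, so this is false whenever $pk>2$, i.e., for all $k > n/(4(C+1)\log m)$---which is most of your declared regime $k\le n/(8(C+1))$, since there $pk$ ranges up to $\log m$. This is not a cosmetic constant: the derived per-set bound $\exp(-mpk/16)$ is exactly what you need to beat $\binom{n}{k}\le n^k$, because $mp/16 = (C+1)m\log m/(2n)$ exceeds $(1+c)\log n$ only with the full factor $1/16$ when $m=\Theta(n)$ and $C=1$. If you repair the expectation honestly (in the saturated range $\E\ge(1-e^{-1})m$ while $s(S)\le 3m/16$), the lower-tail form of \Cref{lemma:Chernoff} gives only $\exp\bigl(-\delta^2\E/2\bigr)\approx\exp(-0.16m)$ per set, which \emph{loses} to $\binom{n}{k}\approx\exp(0.23n)$ at $k=n/16$ when $m$ is close to $n$---so the stated tools do not close the regime. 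The fix is available but requires a different bound: for $pk\ge 2$, bound the complement directly via $\Pr[m-|N(S)|\ge 13m/16]\le\binom{m}{\lceil 13m/16\rceil}e^{-13mpk/16}\le\exp\bigl(m\log 2 - 13mpk/16\bigr)\le\exp(-0.9m)$, i.e., extend the Poisson-type tail you already use in the large-$k$ regime down into the middle range. With that substitution (or by adopting the paper's witness-set counting, which sidesteps tail bounds entirely), your argument goes through.
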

    
    From our earlier discussion,  \Cref{lemma:existence_of_perfect_s_matching} implies that a WPROP allocation exists (and can be found in polynomial time) with high probability, thereby yielding \Cref{thm:WPROP-existence}.

To prove Lemma~\ref{lemma:existence_of_perfect_s_matching}, we first recall basic results from matching theory.
For a bipartite graph $G=(L\cup R, E)$ and a subset $Y\subseteq L$, denote by $N_G(Y)$ the set of vertices in $R$ that are adjacent to at least one vertex in $Y$. 
A \emph{matching} in a graph is a set of edges no two of which share a vertex. 
A matching is called \emph{left-saturating} if every vertex in $L$ is incident to exactly one edge in the matching. 
We now state Hall's marriage theorem, a classical result in matching theory.
\begin{lemma}[Hall’s marriage theorem]
    Let $G=(L\cup R, E)$ be a bipartite graph with $|L| \le |R|$. If $|N_G(Y)| \ge |Y|$ for all subsets $Y\subseteq L$, there exists a left-saturating matching in~$G$.
\end{lemma}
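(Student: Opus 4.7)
The plan is to prove Hall's marriage theorem by strong induction on $|L|$. The base case $|L|=1$ is immediate, since Hall's condition applied to $L$ itself produces at least one neighbor in $R$ to which the unique vertex can be matched. For the inductive step, I would split into two cases depending on whether some nonempty proper subset of $L$ makes the Hall inequality tight or whether all such subsets leave slack.

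In the \emph{slack case}, every nonempty $Y \subsetneq L$ satisfies $|N_G(Y)| \ge |Y| + 1$. I would pick an arbitrary vertex $v \in L$, match it to any neighbor $u \in N_G(\{v\})$ (which exists because $|N_G(\{v\})| \ge 1$), and delete both to obtain the bipartite subgraph $G'$ on $(L \setminus \{v\}) \cup (R \setminus \{u\})$. For any $Z \subseteq L \setminus \{v\}$, removing the single vertex $u$ decreases $|N_G(Z)|$ by at most one, so $|N_{G'}(Z)| \ge |N_G(Z)| - 1 \ge |Z|$ by the strict slack hypothesis. Hall's condition is preserved on $G'$, and the inductive hypothesis applied to $G'$ yields a left-saturating matching on $L \setminus \{v\}$, which combined with the edge $\{v,u\}$ gives the desired matching on $L$.

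In the \emph{tight case}, there exists a nonempty proper $Y \subsetneq L$ with $|N_G(Y)| = |Y|$. Here I would apply induction twice: first to the bipartite subgraph $G_1$ induced by $Y \cup N_G(Y)$, then to $G_2$ induced by $(L \setminus Y) \cup (R \setminus N_G(Y))$. Hall's condition on $G_1$ is inherited directly from $G$, because $N_{G_1}(Y') = N_G(Y')$ for every $Y' \subseteq Y$. The main step requiring care is verifying Hall's condition on $G_2$: for any $Z \subseteq L \setminus Y$, using the bipartite identity $N_G(Y \cup Z) = N_G(Y) \cup N_G(Z)$, I would derive
\[
|N_{G_2}(Z)| = |N_G(Z) \setminus N_G(Y)| = |N_G(Y \cup Z)| - |N_G(Y)| \ge |Y \cup Z| - |Y| = |Z|,
\]
where the inequality applies Hall on $G$ to $Y \cup Z$ and the last step uses $Y \cap Z = \emptyset$. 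The inductive hypothesis then yields left-saturating matchings on $G_1$ and $G_2$, and since their vertex sets are disjoint, the union of the two matchings is a left-saturating matching on $L$ in $G$. The main obstacle I anticipate is the set-theoretic bookkeeping in the tight case — in particular, being careful that the inductive hypothesis indeed applies to $G_2$, which requires the cardinality constraint $|L \setminus Y| \le |R \setminus N_G(Y)|$; this follows from $|L| \le |R|$ together with $|N_G(Y)| = |Y|$.
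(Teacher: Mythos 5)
Your proof is correct, but note that the paper itself gives no proof of this lemma: it states Hall's marriage theorem as a known classical result and immediately builds on it (via the copy-vertex construction) to obtain Proposition~\ref{proposition:a_perfect_s-matching} for $\bm{s}$-matchings. What you have written is the standard Halmos--Vaughan induction on $|L|$, and all the delicate points check out: in the slack case, any $Z \subseteq L \setminus \{v\}$ is automatically a proper subset of $L$, so the strict inequality $|N_G(Z)| \ge |Z|+1$ is legitimately available and survives the deletion of the single matched vertex $u$; in the tight case, your identity $|N_G(Z) \setminus N_G(Y)| = |N_G(Y \cup Z)| - |N_G(Y)|$ follows from $N_G(Y \cup Z) = N_G(Y) \cup N_G(Z)$, and you correctly verify the side condition $|L \setminus Y| \le |R \setminus N_G(Y)|$ needed to invoke the inductive hypothesis on $G_2$ (the analogous condition for $G_1$ holds with equality since $|N_G(Y)| = |Y|$). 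Both $Y$ and $L \setminus Y$ are strictly smaller than $L$ because $Y$ is nonempty and proper, so the induction is well-founded, and the two matchings combine disjointly since the $G_1$-matching uses only vertices of $N_G(Y)$ on the right while the $G_2$-matching uses only $R \setminus N_G(Y)$. In short: a complete, self-contained proof of a statement the paper deliberately left unproved; nothing is missing.
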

To extend this result to our setting with $\bm{s}$-matchings, we construct a new bipartite graph $G'$ from $G=(L\cup R, E)$ by replacing each vertex $i \in L$ with $s_i$ copies, where each copy is connected to all neighbors of $i$ in $G$. 
This immediately leads to the following proposition.
\begin{proposition}\label{proposition:a_perfect_s-matching}
    Let $G=(L\cup R, E)$ be a bipartite graph with $n = |L| \le |R| = m$.
    For any vector of positive integers $\bm{s}=(s_1,s_2,\ldots, s_n)$ such that $\sum_{i=1}^n s_i \le m$, if $|N_G(Y)| \ge \sum_{i\in Y} s_i$ for all subsets $Y\subseteq L$, then there exists a left-saturating $\bm{s}$-matching in $G$.
\end{proposition}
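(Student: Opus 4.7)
The plan is to derive the proposition directly from Hall's marriage theorem by formalizing the vertex-duplication construction sketched just before the statement. Specifically, I would build an auxiliary bipartite graph $G' = (L' \cup R, E')$, where $L'$ consists of $s_i$ copies of each vertex $i \in L$ (so $|L'| = \sum_{i=1}^n s_i \le m = |R|$), and each copy of $i$ in $L'$ is connected to exactly the neighbors of $i$ in $G$. The goal is to show that Hall's condition is satisfied in $G'$, extract a left-saturating matching in $G'$ via Hall's theorem, and then translate it back into a left-saturating $\bm{s}$-matching in $G$.

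First I would verify Hall's condition in $G'$. Let $Y' \subseteq L'$ be arbitrary, and let $Y \subseteq L$ denote the set of original vertices that have at least one copy appearing in $Y'$. By construction of $G'$, every neighbor of a copy of $i$ in $R$ is a neighbor of $i$ in $G$ (and vice versa), so $N_{G'}(Y') = N_G(Y)$. Since $Y'$ contains at most $s_i$ copies of each $i \in Y$, we have $|Y'| \le \sum_{i \in Y} s_i$, and the hypothesis of the proposition gives $\sum_{i \in Y} s_i \le |N_G(Y)| = |N_{G'}(Y')|$. Combined, $|Y'| \le |N_{G'}(Y')|$, establishing Hall's condition.

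Applying Hall's marriage theorem to $G'$ (whose left side has size at most that of the right side) yields a left-saturating matching $M'$ in $G'$. I would then define $F \subseteq E$ by including the edge $\{i,r\}$ whenever some copy of $i$ is matched to $r$ in $M'$. Since $M'$ saturates $L'$, every original $i \in L$ has exactly $s_i$ copies, each matched to a distinct vertex of $R$; hence $i$ is incident to exactly $s_i$ edges of $F$. Since $M'$ is a matching, each vertex in $R$ appears in at most one edge of $F$. This shows $F$ is a left-saturating $\bm{s}$-matching, completing the argument.

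There is really no serious obstacle here, since the construction is a textbook reduction; the only point requiring care is to verify that the neighborhood identity $N_{G'}(Y') = N_G(Y)$ together with the bound $|Y'| \le \sum_{i\in Y} s_i$ is what converts the hypothesized Hall-type inequality for $\bm{s}$-matchings in $G$ into the standard Hall condition for $G'$. Everything else follows mechanically from unpacking the definitions.
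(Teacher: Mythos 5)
Your proposal is correct and follows exactly the vertex-duplication reduction the paper itself uses (the paper merely sketches the construction of $G'$ with $s_i$ copies of each $i\in L$ and declares that the proposition ``immediately'' follows); your write-up simply makes the verification of Hall's condition and the translation back to an $\bm{s}$-matching explicit. No further changes are needed.
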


Next, we recall the \emph{Erd\H{o}s-R\'{e}nyi random bipartite graph model}~\citep{ErdosRe64}.
For any $p \in [0, 1]$, 
let $\mathcal{G}(|L|, |R|, p)$ denote the probability distribution over bipartite graphs with vertex sets $L$ and $R$ such that for each pair of vertices $i \in L$ and $j \in R$, the edge $(i,j)$ exists independently with probability $p$.
Let $s_{\max} = \max_{i\in N}s_i$ and $s_{\min} = \min_{i\in N}s_i$.
The probability that no left-saturating $\bm{s}$-matching exists can be upper-bounded by the probability that there exists a subset violating the condition in Proposition~\ref{proposition:a_perfect_s-matching}. Based on this insight, we establish the following lemma, whose proof is deferred to Appendix~\ref{appendix:proof_existence_of_perfect_s_matching_in_random_graph}.
\begin{lemma}\label{lemma:existence_of_perfect_s_matching_in_random_graph}
    Let $B\ge 1$ be a constant, and
    let $G$ be a bipartite graph sampled from $\mathcal{G}(n, m, p)$ with $n\le m$ and $\frac{8 B \log m}{n}\le p \le 1$.
    For any vector of positive integers $\bm{s}=(s_1,s_2,\ldots, s_n)$ such that $\sum_{i=1}^n s_i \le m$ and $\frac{s_{\max}}{s_{\min}} \le B$, with high probability, $G$ contains a left-saturating $\bm{s}$-matching.
\end{lemma}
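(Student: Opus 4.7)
The plan is to apply Proposition~\ref{proposition:a_perfect_s-matching}, reducing the claim to showing that, with high probability, every nonempty $Y \subseteq L$ satisfies $|N_G(Y)| \ge \sigma_Y := \sum_{i \in Y} s_i$. For a fixed $Y$ with $|Y| = k$, I will observe that $|N_G(Y)| < \sigma_Y$ holds if and only if some $Z \subseteq R$ of size $m - \sigma_Y + 1$ has no edges to $Y$; since the edges are independent, this has probability $(1-p)^{k(m - \sigma_Y + 1)} \le e^{-pk(m - \sigma_Y + 1)}$ for each fixed $(Y, Z)$. I will then take a union bound over all such $(Y, Z)$.

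Two upper bounds on $\sigma_Y$ will do the work. From $\sum_i s_i \le m$ we get $s_{\min} \le m/n$, and then the ratio condition gives $\sigma_Y \le k s_{\max} \le kBm/n$. Separately, from $s_i \ge 1$ for $i \notin Y$ we obtain $\sum_{i \notin Y} s_i \ge n - k$ and hence $\sigma_Y \le m - (n - k)$. Based on these, I will split the analysis into the regimes $k \le n/(2B)$ and $k > n/(2B)$.

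In the small regime $k \le n/(2B)$, the first bound gives $\sigma_Y \le m/2$ and thus $m - \sigma_Y + 1 \ge m/2$. Using the crude inequalities $\binom{n}{k}\binom{m}{\sigma_Y - 1} \le m^{k + \sigma_Y - 1} \le m^{k + kBm/n - 1}$ and $(1-p)^{k(m - \sigma_Y + 1)} \le m^{-4Bkm/n}$ (which uses $p \ge 8B\log m/n$), and simplifying with $3Bm/n \ge 3$, each term collapses to $m^{-2k - 1}$, and summation over $k$ gives $O(1/m^3)$. In the large regime $k > n/(2B)$, we get $pk \ge 4\log m$, so for $\ell_Y := m - \sigma_Y + 1$, $\binom{m}{\ell_Y}(1-p)^{k\ell_Y} \le m^{\ell_Y} \cdot m^{-4\ell_Y} = m^{-3\ell_Y}$. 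The second bound yields $\ell_Y \ge n - k + 1$; reparameterizing by $j = n - k$ and using $\binom{n}{k} = \binom{n}{j} \le n^j \le m^j$, the sum becomes geometric in $j$ with ratio at most $n/m^3 \le 1/m^2$, again contributing $O(1/m^3)$.

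The main obstacle is the large-$Y$ regime. There the ratio-based bound on $\sigma_Y$ degenerates to the trivial $\sigma_Y \le m$, leaving essentially no slack between $|N_G(Y)|$ and $\sigma_Y$. The key trick is to substitute the integrality-based bound $\sigma_Y \le m - (n - k)$, which reserves a gap of $n - k$ items for the agents outside $Y$; this gap, combined with the much stronger per-vertex decay $(1-p)^k \le m^{-4}$ that becomes available once $pk \ge 4\log m$, is precisely what lets the union bound over exponentially many large subsets close.
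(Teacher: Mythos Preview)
Your proof is correct and follows the same approach as the paper: reduce via Proposition~\ref{proposition:a_perfect_s-matching} to a union bound over pairs $(Y,Z)$, and control the exponent using the same two inequalities $\sigma_Y \le k\,s_{\max} \le kBm/n$ and $m-\sigma_Y+1 \ge n-k+1$. The only difference is organizational: you split explicitly into the regimes $k \le n/(2B)$ and $k > n/(2B)$, whereas the paper handles all $k$ at once by factoring $k(m-\sigma_Y+1) = \tfrac{k}{\sigma_Y}\,\max\{\sigma_Y,\,m-\sigma_Y+1\}\,\min\{\sigma_Y,\,m-\sigma_Y+1\}$ and combining this with the symmetric bounds $\binom{n}{k} \le n^{\min\{k,\,n-k\}}$ and $\binom{m}{m-\sigma_Y+1} \le m^{\min\{\sigma_Y,\,m-\sigma_Y+1\}}$.
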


We can now complete the proof of Lemma~\ref{lemma:existence_of_perfect_s_matching}.
\begin{proof}[Proof of Lemma~\ref{lemma:existence_of_perfect_s_matching}]
To apply Lemma~\ref{lemma:existence_of_perfect_s_matching_in_random_graph},
we verify two conditions: firstly, that $\frac{s_{\max}}{s_{\min}} \le C + 1$, and secondly, that the graph $G_{\ge \tau}$ follows the Erd\H{o}s-R\'{e}nyi random bipartite graph model. 
Recall that $s_i = \left\lceil(1 + \delta)  \frac{w_i}{W} \cdot \frac{\mu m}{\tau}\right\rceil$.  
Since $s_{\min} \geq 1$ and $\frac{w_{\max}}{w_{\min}} \le C$, we have
    \begin{align*}
        \frac{s_{\max}}{s_{\min}} 
        &\le 
            \frac{(1 + \delta)  \frac{w_{\max}}{W}  \cdot \frac{\mu m}{\tau} + 1}{s_{\min}} \\
        &\le 
            \frac{(1 + \delta)  \frac{w_{\max}}{W} \cdot \frac{\mu m}{\tau} }{(1 + \delta) \frac{w_{\min}}{W} \cdot \frac{\mu m}{\tau}} + 1
        \le 
            C + 1.
    \end{align*}
    Moreover, since $\mathcal{D}$ is $(\alpha,\beta)$-PDF-bounded, the events $\{u_i(j) \ge \tau\}$ occur independently for all pairs $(i,j) \in N \times M$, each with probability
    \begin{align*}
    \mathrm{Pr}\left[ u_i(j) \ge \tau \right] 
    &= \mathrm{Pr}\left[ u_i(j) \ge 1-\frac{8(C + 1) \log m}{\alpha  n} \right] \\
    &\ge \alpha \cdot \frac{8 (C + 1) \log m}{\alpha n} = \frac{8 (C + 1) \log m}{n}.
    \end{align*}
    Therefore, by setting $B= C + 1$ in Lemma~\ref{lemma:existence_of_perfect_s_matching_in_random_graph}, we conclude that the graph $G_{\ge \tau}$ contains a left-saturating $\bm{s}$-matching with high probability, as desired.
\end{proof}

\section{Two Agents}
\label{sec:two-agents}

In this section, we focus on the case of $n = 2$ agents---note that WEF and WPROP are equivalent in this case.
We let $r = w_2/w_1 \ge 1$ and consider the asymptotics as $r\to\infty$.
Our main result is that if the number of items grows asymptotically faster than $\sqrt{r}$, a WEF allocation is likely to exist.

\begin{theorem}\label{thm:two_agents}
    Suppose that there are two agents with weights $(w_1,w_2)$ where $r := w_2/w_1 \geq 1$, and that $\mathcal{D}$ is PDF-bounded. 
    If $m = \omega(\sqrt{r})$, then a WEF allocation exists with probability approaching $1$ as $r \to \infty$, and such an allocation can be found in polynomial time. 
    The same holds for WPROP.
\end{theorem}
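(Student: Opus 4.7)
The plan is to analyze a simple threshold algorithm. I would fix a small constant $c > 0$ (to be determined), let $\mu$ denote the mean of $\cD$, set $\eps := c\sqrt{\mu/r}$, and define
\[
A_1 := \{j \in M : u_2(j) \le \eps\} \quad\text{and}\quad A_2 := M \setminus A_1.
\]
Since WEF and WPROP coincide for $n=2$, it suffices to show that with probability tending to $1$ as $r \to \infty$, both
\[
u_1(A_1) \ge \frac{u_1(M)}{r+1} \quad\text{and}\quad u_2(A_1) \le \frac{u_2(M)}{r+1}
\]
hold.

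The core of the argument is a short collection of concentration facts, each holding with high probability. First, \Cref{lemma:Chernoff} gives $u_1(M), u_2(M) \in [(1-o(1))m\mu,\,(1+o(1))m\mu]$, using that $m \to \infty$ (which follows from $m = \omega(\sqrt{r})$). Second, $|A_1|$ is a sum of $m$ i.i.d.\ Bernoulli variables with success probability $\mathrm{Pr}[u_2 \le \eps] \in [\alpha\eps,\beta\eps]$ (since $\cD$ is $(\alpha,\beta)$-PDF-bounded), so $\mathbb{E}[|A_1|] = \Theta(m\eps) = \Theta(\sqrt{\mu}\,m/\sqrt{r})$, which diverges because $m = \omega(\sqrt{r})$; \Cref{lemma:Chernoff} then yields $|A_1| \in [\tfrac{\alpha}{2}\,m\eps,\; 2\beta\,m\eps]$. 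Third, since $u_1$ is independent of $u_2$, the values $\{u_1(j) : j \in A_1\}$ are i.i.d.\ $\cD$ conditional on $A_1$; another application of \Cref{lemma:Chernoff} to this sum of $|A_1|\to\infty$ i.i.d.\ $[0,1]$ terms of mean $\mu$ gives $u_1(A_1) \ge (\mu/2)\,|A_1|$. Fourth, $u_2(A_1) \le |A_1|\cdot\eps$ holds deterministically, since every summand is at most $\eps$.

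These estimates combine to verify both WPROP inequalities with room to spare. For agent~$1$: $u_1(A_1)$ is at least of order $m\eps\mu = \Theta(m/\sqrt{r})$, whereas the threshold $u_1(M)/(r+1)$ is $\Theta(m/r)$, so the ratio grows like $\sqrt{r}\to\infty$. For agent~$2$: $u_2(A_1) \le |A_1|\,\eps \le 2\beta m\eps^2 = 2\beta c^2\cdot m\mu/r$, whereas $u_2(M)/(r+1) \ge (1-o(1))m\mu/r$, so the inequality is strict once $c$ is fixed small enough (e.g., $c < 1/(2\sqrt{\beta})$). The algorithm itself is clearly polynomial-time, as it only requires comparing each $u_2(j)$ to $\eps$.

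The hard part will be guaranteeing that all of the above high-probability events hold simultaneously with sufficient slack. When $m$ barely exceeds $\sqrt{r}$, the mean size $\mathbb{E}[|A_1|]$ grows very slowly, so the Chernoff tails in the second and third facts decay correspondingly slowly; one must also handle the fact that the third Chernoff bound is applied conditionally on $|A_1|$, which itself is random. However, since only a constant number of events need to be combined via the union bound and only a constant multiplicative slack is required on each side, taking $c$ small enough and $r$ large enough (so that $\eps \le 1$ and all tail probabilities are $o(1)$) suffices to close the argument.
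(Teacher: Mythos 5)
Your proposal is correct and follows essentially the same strategy as the paper's proof: allocate to agent~1 exactly those items whose value to agent~2 falls below a threshold of order $1/\sqrt{r}$, then verify both WPROP inequalities via Chernoff bounds on $u_1(M)$, $u_2(M)$, $|A_1|$, and $u_1(A_1)$. The only cosmetic differences are that the paper fixes the threshold by its quantile ($p = \frac{\alpha\mu}{2\sqrt{r+1}}$) rather than by its value, and controls $u_1(A_1)$ with a single Chernoff application to the $m$ independent terms $u_1(g)\cdot\mathbf{1}[g \in A_1]$ instead of conditioning on $|A_1|$.
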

\begin{proof}
Since WEF and WPROP are equivalent for $n = 2$, it suffices to focus on WPROP.
Suppose that $\mathcal{D}$ is $(\alpha,\beta)$-PDF-bounded with mean $\mu \in (0,1)$.

We present an allocation algorithm.
Set $p = \frac{\alpha \mu}{2\sqrt{r+1}} \in (0,1)$, and define $\tau$ as the threshold such that $\Pr[X \leq \tau] = p$ for $X \sim \mathcal{D}$.
The algorithm allocates each item whose value to agent $2$ is less than $\tau$ to agent $1$, and the remaining items to agent $2$.
Clearly, this algorithm runs in polynomial time.
Let $A = (A_1,A_2)$ denote the resulting allocation.
To show that $A$ is WPROP with probability $1 - o(1)$, it suffices to prove that each of the two inequalities $(r+1)u_1(A_1) < u_1(M)$ and $(r+1)u_2(A_1) > u_2(M)$ holds with probability $o(1)$.

First, consider agent $1$.
For any sufficiently large $r$, we have $p(r + 1) \geq 3$. When this holds, the union bound yields
\begin{align*}
&\Pr[(r+1)u_1(A_1) < u_1(M)] \\
&\leq \Pr\left[u_1(M) > \frac{3}{2} m\mu\right] + \Pr\left[u_1(A_1) < \frac{1}{2} mp\mu\right].
\end{align*}
We may bound each term above via \Cref{lemma:Chernoff}. 
In particular, $u_1(M)$ is a sum of $m$ independent random variables drawn from $\cD$, so $\E[u_1(M)] = m\mu$; \Cref{lemma:Chernoff} then implies that $\Pr[u_1(M) > \frac{3}{2} m\mu] \leq \mathrm{exp}\left( - \frac{1}{10} m \mu \right) = o(1)$. 
Meanwhile, since each item is independently allocated to agent $1$ with probability $p$, $u_1(A_1)$ is a sum of independent random variables with $\E[u_1(A_1)] = mp\mu$, so \Cref{lemma:Chernoff} implies that $\Pr\left[ u_1(A_1) < \frac{1}{2}  mp\mu  \right] 
\le \mathrm{exp}\left(- \frac{1}{8}mp \mu \right)$, which is $o(1)$ since $m = \omega(\sqrt{r})$. 
Combining these, we get $\Pr[(r+1)u_1(A_1) < u_1(M)] = o(1)$, as desired.

Next, consider agent $2$.
By the union bound, we have
\begin{align*}
&\Pr[(r+1)u_2(A_1) > u_2(M)] \\
&\leq \Pr\left[u_2(M) < \frac{1}{2} m\mu\right] + \Pr\left[u_2(A_1) > \frac{m \mu}{2(r+1)}\right].
\end{align*}
For the first term, we can again use \Cref{lemma:Chernoff} in a similar way as above to conclude that $\Pr\left[u_2(M) < \frac{1}{2} m\mu\right] \le \mathrm{exp} \left( - \frac{1}{8} m \mu \right) = o(1)$. 

We now focus on the second term $\Pr\left[u_2(A_1) > \frac{m \mu}{2(r+1)}\right]$. 
For every item $g \in M$, define $X_g$ to be a random variable such that $X_g=1$ if $g\in A_1$, and $X_g=0$ otherwise. 
Note that each $X_g$ is an independent random variable with $\E[X_g] = p$.
Since $\mathcal{D}$ is $(\alpha,\beta)$-PDF-bounded, we have $\alpha\tau \le \mathrm{Pr}[X\le \tau] = p$, and so $\tau \le \frac{\mu}{2 \sqrt{r+1}}$.
By definition of the allocation, every item $g \in A_1$ satisfies $u_2(g) \le \tau$. 
Therefore, we have
$u_2(A_1) = \sum_{g\in A_1} u_2(g) \le 
\tau \cdot \sum_{g\in M}  X_g \le \frac{\mu}{2\sqrt{r+1}} \sum_{g\in M}  X_g$.
Observe that $\frac{3}{2}\mathbb{E}\left[\sum_{g\in M} X_g\right] = \frac{3}{2} mp = \frac{3  m \alpha \mu}{4\sqrt{r+1}} \le \frac{m}{\sqrt{r+1}}$ since $\alpha \le 1$ and $\mu<1$. Using this inequality and the bound on $u_2(A_1)$, we obtain
\begin{align*}
    &\mathrm{Pr}\left[ u_2(A_1) > \frac{m \mu}{2(r+1)} \right] 
    \le
        \mathrm{Pr}\left[ \sum_{g\in M}  X_g > \frac{m}{\sqrt{r+1}} \right] \\
    &\le
        \mathrm{Pr}\left[ \sum_{g\in M}  X_g > \frac{3}{2}\mathbb{E}\left[\sum_{g\in M}  X_g\right] \right] \\
    &\le 
        \mathrm{exp} \left( -\frac{1}{10}\mathbb{E}\left[\sum_{g\in M}  X_g\right]  \right) 
        = 
        \mathrm{exp} \left( -\frac{1}{10} m p  \right),
\end{align*}
where we use \Cref{lemma:Chernoff} for the last inequality. 
This probability is $o(1)$ since $m = \omega(\sqrt{r})$.

Hence, combining our bounds for both agents, we conclude that $A$ is WPROP with probability $1 - o(1)$.
\end{proof}

Next, we establish the tightness of the bound in \Cref{thm:two_agents}.
In particular, the following theorem implies that the probability that \emph{no} WEF allocation exists is at least constant if $m = O(\sqrt{r})$, and approaches $1$ if $m = o(\sqrt{r})$.

\begin{theorem}\label{thm:two_agents_nonexistence}
    Suppose that there are two agents with weights $(w_1,w_2)$ where $r := w_2/w_1 \geq 1$, and that $\mathcal{D}$ is PDF-bounded. 
    With probability at least $1 - O(m^2/r)$, no WEF allocation exists.
    The same holds for WPROP.
\end{theorem}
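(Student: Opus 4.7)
The plan is to reduce the question to a clean first-moment computation. Since WEF and WPROP are equivalent for $n=2$, I only need to upper bound the probability that some WPROP allocation exists.

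First I would extract a deterministic necessary condition on the utilities for a WPROP allocation to exist. Suppose $(A_1, A_2)$ is WPROP. Because $\cD$ is non-atomic, $u_1(M) > 0$ almost surely, so $u_1(A_1) \ge \frac{1}{r+1} u_1(M) > 0$ forces $A_1 \ne \emptyset$. Agent~$2$'s WPROP constraint $u_2(A_2) \ge \frac{r}{r+1} u_2(M)$ is equivalent, via $u_2(A_2) = u_2(M) - u_2(A_1)$, to $u_2(A_1) \le \frac{1}{r+1} u_2(M)$; and since every item has value in $[0,1]$ we have $u_2(M) \le m$, so $u_2(A_1) \le \frac{m}{r+1}$. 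Picking any $g \in A_1$ therefore yields $u_2(g) \le u_2(A_1) \le \frac{m}{r+1}$. Thus the mere existence of a WPROP allocation already implies the existence of at least one item $g \in M$ with $u_2(g) \le \frac{m}{r+1}$.

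The second step is a union bound over items. Writing $\cD$ as $(\alpha,\beta)$-PDF-bounded gives $\Pr[u_2(g) \le t] \le \beta t$ for any $t \in [0,1]$, so
\[
\Pr[\text{a WPROP allocation exists}] \;\le\; \Pr\!\left[\exists\, g \in M : u_2(g) \le \tfrac{m}{r+1}\right] \;\le\; m \cdot \beta \cdot \tfrac{m}{r+1} \;=\; O\!\left(m^2/r\right).
\]
Taking the complementary event yields the claimed $1 - O(m^2/r)$ lower bound on the probability that no WEF (equivalently, no WPROP) allocation exists. In the regime $\tfrac{m}{r+1} > 1$ the theorem is vacuous, since then $1 - O(m^2/r) \le 0$.

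\textbf{Main obstacle.} There is essentially none; the entire argument hinges on the crude but decisive inequality $u_2(M) \le m$, which decouples the random threshold $\tfrac{1}{r+1}u_2(M)$ from the individual item values and collapses the analysis to a single one-item probability. One could replace $u_2(M)$ by the sharper value $m\mu$ on a high-probability Chernoff event to improve the hidden constant, but the resulting $O(m^2/r)$ rate is already the correct order and matches the $\sqrt{r}$ threshold from Theorem~\ref{thm:two_agents}.
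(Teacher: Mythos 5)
Your proposal is correct and follows essentially the same route as the paper: both arguments reduce to the deterministic observation that any fair allocation must give agent $1$ at least one item, which agent $2$ can value at most on the order of $m/r$, and then bound the probability that any such low-value item exists by $O(\beta m^2/r)$ (you via a union bound, the paper via the product $(1-\beta m/r)^m$ and Bernoulli's inequality). The only cosmetic difference is that you work with the WPROP threshold $m/(r+1)$ while the paper works with the WEF threshold $m/r$; these are equivalent for two agents and yield the same asymptotic bound.
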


\begin{proof}
    Since WEF and WPROP are equivalent for the case of $n = 2$, it suffices to focus on WEF.
    Suppose that $\mathcal{D}$ is $(\alpha,\beta)$-PDF-bounded, and let $Y=\min_{g\in M} u_2(g)$. 
    If $1 - \beta m/r \le 0$, then $1 - \beta m^2/r \le 0 \le \Pr[Y > m/r]$.
    Else, we have
    \begin{align*}
    \Pr\left[Y > \frac{m}{r}\right] &= \prod_{g\in M}\mathrm{Pr} \left[ u_2(g) > \frac{m}{r} \right] \\
    &\geq \left( 1 - \beta \cdot \frac{m}{r} \right)^m 
    \geq 1 - \beta \cdot \frac{m^2}{r},
    \end{align*}
    where we use the $(\alpha,\beta)$-PDF-boundedness of $\mathcal{D}$ for the first inequality and Bernoulli's inequality for the second.

    When $Y > m/r$, since any WEF allocation $(A_1, A_2)$ must give at least one item to agent $1$, we have $r \cdot u_2(A_1) \ge r \cdot Y > m > u_2(A_2)$. This means that no WEF allocation exists.
\end{proof}

\section{Conclusion and Future Work}

In conclusion, our work analyzes weighted fair division from an asymptotic perspective and establishes tight or asymptotically tight bounds on the existence of weighted envy-free and weighted proportional allocations.
Notably, a larger number of items is required for weighted proportionality than in the unweighted setting, and this number depends on the mean of the distribution from which the utilities are drawn.
We also investigated the relationship between the number of items and the weight ratio in the case of two agents.

Since we have considered settings where the weight ratio is bounded and where the number of agents is bounded separately, a natural next step is to obtain a more complete understanding when both parameters are allowed to grow.
For example, if all but one agents have the same (fixed) weight while the last agent's weight grows, what is the required number of items in terms of the number of agents and the last agent's weight?
Another interesting direction is to extend our results to a more general model where the items are allocated among \emph{groups} of agents rather than individual agents.
While asymptotic fair division for groups was previously explored \citep{ManurangsiSu17}, the study thus far has focused on groups with equal entitlements.
Permitting different entitlements can help us model applications where the groups have varying importance, e.g., depending on their sizes.


\section*{Acknowledgments}

This work was partially supported by JST
ERATO under grant number JPMJER2301, by the Singapore Ministry of Education under grant number MOE-T2EP20221-0001, and by an NUS Start-up Grant.
We thank the anonymous reviewers for their valuable feedback.

\bibliographystyle{named}
\bibliography{ijcai25}

\appendix

\onecolumn

\section{Proof of Lemma~\ref{lem:existence_of_T}}\label{appendix:proof_existence_of_T}

\subsection{Auxiliary Lemmas}
Before proving Lemma~\ref{lem:existence_of_T}, we establish a key property of the weighted picking sequence algorithm. 
Specifically, the following lemma shows that at any step~$s$, the number of picks $t_i(s)$ made by each agent $i$ is approximately proportional to her weight $w_i$.
Recall that $W = \sum_{i\in N}w_i$.
\begin{lemma}\label{lemma:t_i_s}
    In Algorithm~\ref{alg:WPS-mechanism}, for every step $s$ and all $i,j \in N$, we have
    \begin{equation}\label{eq:bound_diff}
        \left|\frac{t_i(s)}{w_i} - \frac{t_j(s)}{w_j} \right| \le \frac{1}{\min(w_i,w_j)}.
    \end{equation}
    Moreover, for every step $s$ and all $i\in N$, we have
    \begin{equation}\label{eq:t_i_s}
        \left|t_i(s) - \frac{w_i }{W} \cdot s  \right| \le \frac{w_i}{w_{\min}}.
    \end{equation}
\end{lemma}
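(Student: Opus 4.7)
The plan is to prove (\ref{eq:bound_diff}) by induction on the step $s$ and then deduce (\ref{eq:t_i_s}) from it via a simple averaging argument. The base case $s=0$ is trivial since $t_i(0) = 0$ for every $i$. For the inductive step, let $i^*$ denote the agent selected at step $s+1$, which by the algorithm's rule satisfies $t_{i^*}(s)/w_{i^*} = \min_{k \in N} t_k(s)/w_k$. If $i^* \notin \{i,j\}$, neither $t_i$ nor $t_j$ changes, so the bound is trivially preserved. By symmetry, it therefore suffices to handle the case $i^* = i$.

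In this case, the minimality of $t_i(s)/w_i$ gives $t_i(s)/w_i \leq t_j(s)/w_j$, and the pick updates this to $t_i(s+1)/w_i = t_i(s)/w_i + 1/w_i$ while leaving $t_j(s+1)/w_j = t_j(s)/w_j$ unchanged. If $t_i(s+1)/w_i \leq t_j(s)/w_j$, the gap between the two ratios only shrank, so (\ref{eq:bound_diff}) still holds by the inductive hypothesis. Otherwise, the new signed gap satisfies $t_i(s+1)/w_i - t_j(s)/w_j = \bigl(t_i(s)/w_i - t_j(s)/w_j\bigr) + 1/w_i \leq 1/w_i \leq 1/\min(w_i,w_j)$, where the first inequality uses the minimality above and the second uses $w_i \geq \min(w_i,w_j)$. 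This closes the induction.

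For (\ref{eq:t_i_s}), fix $i$ and apply (\ref{eq:bound_diff}) to every pair $(i,j)$, using the weaker right-hand bound $1/w_{\min}$. Multiplying by $w_j$ and summing over $j \in N$, the left-hand side telescopes as $\sum_{j \in N} w_j \bigl(t_j(s)/w_j - t_i(s)/w_i\bigr) = s - (W/w_i)\, t_i(s)$, whose absolute value is bounded by $\sum_{j \in N} w_j / w_{\min} = W/w_{\min}$ by the triangle inequality. Multiplying through by $w_i/W$ yields (\ref{eq:t_i_s}).

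The only real subtlety I anticipate is in the inductive step, where one has to verify that a single pick by the agent with the smaller ratio cannot overshoot the other agent's ratio by more than $1/\min(w_i,w_j)$. The key observation is that the ratio jumps by exactly $1/w_{i^*}$ at each pick, so the ``overshoot'' term in the second sub-case is controlled by $1/w_i$, which is in turn dominated by $1/\min(w_i,w_j)$. Everything else is routine bookkeeping.
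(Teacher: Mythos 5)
Your proof is correct and follows essentially the same route as the paper: the same induction on steps using the minimality of the picker's ratio for \eqref{eq:bound_diff}, and an averaging argument over agents using $\sum_{j\in N} t_j(s)=s$ for \eqref{eq:t_i_s}. The only (immaterial) difference is in the second part, where you sum the weighted pairwise deviations and apply the triangle inequality directly, whereas the paper sandwiches each $t_i(s)/w_i$ between $\min_{i'} t_{i'}(s)/w_{i'}$ and that minimum plus $1/w_{\min}$ before relating the minimum to $s/W$; both yield the identical bound.
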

\begin{proof}
    We first prove \eqref{eq:bound_diff} by induction on $s$.
    At $s=0$, we have $t_i(0) = 0$ for all $i\in N$, so $\left|\frac{t_i(0)}{w_i} - \frac{t_j(0)}{w_j} \right| = 0$ for all $i,j \in N$.
    Assume now that \eqref{eq:bound_diff} holds for some step $s\ge 0$, and let $i^* = \mathrm{argmin}_{i'\in N}\frac{t_{i'}(s)}{w_{i'}}$. 
    By description of the algorithm, at step $s + 1$, agent $i^*$ is selected. Thus, we have $t_{i^*}(s+1) = t_{i^*}(s) + 1$ and $t_{i'}(s+1) = t_{i'}(s)$ for all $i' \in N \setminus\{i^*\}$.
    This implies that for any pair $i,j\in N \setminus\{i^*\}$, \eqref{eq:bound_diff} holds at step $s+1$.
    Consider $i^*$ and any $i'\in N\setminus\{i^*\}$.
    If $\frac{t_{i^*}(s+1)}{w_{i^*}} \ge \frac{t_{i'}(s+1)}{w_{i'}} = \frac{t_{i'}(s)}{w_{i'}} $, then 
        \begin{align*}
        \left|\frac{t_{i^*}(s+1)}{w_{i^*}} - \frac{t_{i'}(s+1)}{w_{i'}} \right| 
        = \frac{t_{i^*}(s)}{w_{i^*}} + \frac{1}{w_{i^*}} - \frac{t_{i'}(s)}{w_{i'}}
        \le \frac{1}{w_{i^*}} \le \frac{1}{\min(w_{i^*},w_{i'})},
        \end{align*}
    where the first inequality follows from the definition of $i^*$.
    On the other hand, if $\frac{t_{i^*}(s+1)}{w_{i^*}} < \frac{t_{i'}(s+1)}{w_{i'}}$, then 
    \begin{align*}
    \left|\frac{t_{i^*}(s+1)}{w_{i^*}} - \frac{t_{i'}(s+1)}{w_{i'}}     \right| 
    = \frac{t_{i'}(s)}{w_{i'}} - \frac{t_{i^*}(s)}{w_{i^*}} - \frac{1}{w_{i^*}} 
        \le \frac{1}{\min(w_{i^*},w_{i'})} - \frac{1}{w_{i^*}} \le \frac{1}{\min(w_{i^*},w_{i'})},
    \end{align*}
    where the first inequality follows from the inductive hypothesis.
    Thus, we have established \eqref{eq:bound_diff}.
    
    Next, we prove \eqref{eq:t_i_s}. From \eqref{eq:bound_diff}, we have \begin{equation}\label{eq:t_i_s_over_w_i}
        \min_{i'\in N}\frac{t_{i'}(s)}{w_{i'}} \leq \frac{t_i(s)}{w_i}
        \leq \min_{i'\in N}\frac{t_{i'}(s)}{w_{i'}} + \frac{1}{w_{\min}}    
    \end{equation}
    for all $i\in N$. 
    This implies that
    $w_i\cdot \min_{i'\in N}\frac{t_{i'}(s)}{w_{i'}} \leq t_i(s)
        \leq w_i\cdot \left(\min_{i'\in N}\frac{t_{i'}(s)}{w_{i'}} + \frac{1}{w_{\min}}\right)$ for all $i\in N$. 
        Summing over all $i\in N$ and using the fact that $\sum_{i\in N} t_i(s) = s$, we get $W\cdot \min_{i'\in N}\frac{t_{i'}(s)}{w_{i'}} \leq s
        \leq W\cdot \left(\min_{i'\in N}\frac{t_{i'}(s)}{w_{i'}} + \frac{1}{w_{\min}}\right)$. It follows that
    \[
        \frac{s}{W} - \frac{1}{w_{\min}}
        \le
        \min_{i'\in N}\frac{t_{i'}(s)}{w_{i'}} \le \frac{s}{W}.
    \]
    Combining this with \eqref{eq:t_i_s_over_w_i}, we arrive at \eqref{eq:t_i_s}, as desired.
\end{proof}

We now state useful lemmas from the work of \citet{ManurangsiSu21}.
These lemmas are presented as Lemma~2.4, Proposition~2.2, and Lemma~2.6 in their work, respectively.

\begin{lemma}[\citep{ManurangsiSu21}]\label{lemma:r_c_d}
    Let $r,c,d$ be any positive integers such that $r\ge cd$, and let $X_1,X_2,\ldots, X_r$ be independent random variables in $[0,1]$ that are $(\alpha, \beta)$-PDF-bounded. Then,
    \[
        \mathrm{Pr}[X_1+X_2+ \cdots +X_r \ge r-c] \le 2^r \left(\frac{\beta}{d}\right)^{r-cd}.
    \]
\end{lemma}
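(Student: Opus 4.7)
The plan is to reformulate the tail event in terms of the complementary variables $Y_i := 1 - X_i \in [0,1]$, whose distributions are again $(\alpha,\beta)$-PDF-bounded (the density of $Y_i$ is simply the reflection of $f_{\mathcal{D}}$). The event $\{X_1 + \cdots + X_r \ge r - c\}$ is equivalent to $\{Y_1 + \cdots + Y_r \le c\}$, which says that most of the $X_i$ are forced to lie close to $1$.

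The key deterministic step is a pigeonhole observation: if $Y_1 + \cdots + Y_r \le c$, then at most $cd$ of the $Y_i$ can satisfy $Y_i \ge 1/d$, since otherwise those terms alone would already contribute more than $c$ to the sum. Consequently at least $r - cd$ of the indices $i$ must satisfy $Y_i < 1/d$, equivalently $X_i > 1 - 1/d$. This converts a lower bound on the sum into a purely structural statement about the coordinates, which is much easier to estimate.

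Next I would bound the per-coordinate probability. By $(\alpha,\beta)$-PDF-boundedness,
\[
\mathrm{Pr}\left[X_i > 1 - 1/d\right] \;=\; \int_{1-1/d}^{1} f_{\mathcal{D}}(x)\,dx \;\le\; \beta/d.
\]
Since the $X_i$ are independent, for any fixed subset $S \subseteq [r]$ of size $r - cd$, the probability that every $i \in S$ satisfies $X_i > 1 - 1/d$ is at most $(\beta/d)^{r-cd}$.

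Finally, a union bound over the $\binom{r}{r-cd} \le 2^r$ possible choices of $S$ yields the claimed estimate $2^r \left(\beta/d\right)^{r-cd}$. I do not anticipate any genuine obstacle: the whole argument is elementary, and the only delicate point is to make sure the pigeonhole step and the probability estimate use consistent strict versus non-strict inequalities, which the non-atomicity implied by PDF-boundedness renders harmless. Note that the hypothesis $r \ge cd$ is exactly what keeps the exponent $r - cd$ non-negative and the binomial coefficient meaningful; the lower PDF bound $\alpha$ is not used here, only the upper bound $\beta$ that controls the mass near $1$.
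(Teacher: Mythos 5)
Your proof is correct: the reflection to $Y_i = 1 - X_i$, the pigeonhole step showing at least $r-cd$ coordinates must exceed $1-1/d$, the per-coordinate bound $\beta/d$, and the union bound over $\binom{r}{r-cd}\le 2^r$ subsets together give exactly the claimed estimate. The paper itself states this lemma without proof, importing it as Lemma~2.4 of \citet{ManurangsiSu21}, and your argument is essentially the same elementary one used there, so there is nothing to add.
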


\begin{lemma}[\citep{ManurangsiSu21}]\label{lemma:alpha_over_beta}
    For any $(\alpha, \beta)$-PDF-bounded distribution $\mathcal{D}$ and any $c\in (0,1]$, suppose that we draw $X$ from $\mathcal{D}_{\le c}$. Then, $X/c$ is generated from an $(\alpha/\beta, \beta/\alpha)$-PDF-bounded distribution.
\end{lemma}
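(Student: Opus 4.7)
The plan is to compute the density of $X/c$ explicitly and then bound it by controlling the normalizing constant of the conditional distribution via the PDF-bounds on $\mathcal{D}$.

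First, I would observe that since $\mathcal{D}$ is non-atomic with density $f_{\mathcal{D}}$ satisfying $\alpha \le f_{\mathcal{D}}(x) \le \beta$ on $[0,1]$, the normalizing constant is
\[
    \Pr_{X\sim\mathcal{D}}[X \le c] = \int_0^c f_{\mathcal{D}}(x)\,dx,
\]
which lies in the interval $[\alpha c,\, \beta c]$ by monotonicity of the integral. Consequently, the conditional density $f_{\mathcal{D}_{\le c}}(x) = f_{\mathcal{D}}(x)/\Pr_{X\sim\mathcal{D}}[X\le c]$, defined for $x \in [0,c]$, satisfies
\[
    \frac{\alpha}{\beta c} \;\le\; f_{\mathcal{D}_{\le c}}(x) \;\le\; \frac{\beta}{\alpha c}.
\]

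Next, I would apply the standard change-of-variables formula to the transformation $Y = X/c$. Since $c \in (0,1]$, the support of $Y$ is $[0,1]$, and its density is $f_Y(y) = c \cdot f_{\mathcal{D}_{\le c}}(cy)$ for $y \in [0,1]$. Substituting the bounds from the previous display yields $\alpha/\beta \le f_Y(y) \le \beta/\alpha$ uniformly on $[0,1]$, and non-atomicity is preserved since scaling a non-atomic distribution by a positive constant remains non-atomic. This is exactly the claim that $X/c$ is drawn from an $(\alpha/\beta,\beta/\alpha)$-PDF-bounded distribution.

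I do not anticipate any genuine obstacle: the argument is essentially a one-line computation once one tracks what conditioning and rescaling do to a bounded density. The only thing to be careful about is ensuring that the bounds on $\Pr_{X\sim\mathcal{D}}[X\le c]$ are applied in the correct direction when dividing (the lower bound on the normalizer yields the upper bound on $f_Y$, and vice versa), and that one does not lose the factor of $c$ when performing the change of variables — but both cancel cleanly with the $c$ in the denominators, which is why the resulting bounds depend only on the ratio $\beta/\alpha$ rather than on $c$ itself.
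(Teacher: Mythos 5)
Your proof is correct: the bounds $\alpha c \le \Pr[X\le c]\le \beta c$ on the normalizer, the resulting bounds $\frac{\alpha}{\beta c}\le f_{\mathcal{D}_{\le c}}\le\frac{\beta}{\alpha c}$, and the factor $c$ from the change of variables $Y=X/c$ all combine exactly as you describe. The paper itself gives no proof of this lemma---it is imported verbatim as Proposition~2.2 of \citet{ManurangsiSu21}---and your argument is the standard (essentially the only natural) one for it.
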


\begin{lemma}[\citep{ManurangsiSu21}]\label{lemma:T_s_p}
    Let $T$ be a positive integer, let $s_1,s_2,\ldots, s_T$ be any positive integers, and let $s_{\min} = \min(s_1,s_2,\ldots, s_T)$. Consider the random variable $X_0=1$, and $X_1,X_2,\ldots,X_T$ such that $X_{k+1} \sim \mathcal{D}_{\le X_k}^{\max(s_{k+1})}$ for each $k \in \{0,1,\dots,T-1\}$. 
    If $\mathcal{D}$ is $(\alpha,\beta)$-PDF-bounded, then for any $p\in (0,1)$, we have
    \[
        \mathrm{Pr}\left[ X_T \ge 1 - \frac{\beta}{\alpha} \cdot \frac{T \log(T/p)}{s_{\min}} \right] \ge 1-p.
    \]
\end{lemma}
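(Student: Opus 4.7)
The plan is to bound $X_T$ from below by tracking the $T$ successive ratios $X_{k+1}/X_k$ and then multiplying. Since $X_0 = 1$, we have $X_T = \prod_{k=0}^{T-1} (X_{k+1}/X_k)$, so it suffices to control each ratio with high probability and combine via a union bound.

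The key normalization step is to invoke Lemma~\ref{lemma:alpha_over_beta}: conditioned on $X_k$, a sample $Y \sim \mathcal{D}_{\le X_k}$ equals $X_k \cdot Z$, where $Z$ is drawn from an $(\alpha/\beta,\beta/\alpha)$-PDF-bounded distribution on $[0,1]$. Hence the ratio $X_{k+1}/X_k$ is distributed as the maximum of $s_{k+1}$ i.i.d.\ such $Z$'s. Using only the density lower bound $\alpha/\beta$, a single $Z$ satisfies $\mathrm{Pr}[Z \le 1-t] \le 1 - t\alpha/\beta$ for $t \in [0,1]$, so, writing $\tilde\beta = \beta/\alpha$,
\[
    \mathrm{Pr}\!\left[\,X_{k+1}/X_k \le 1-t \,\big|\, X_k\,\right] \le \bigl(1 - t/\tilde\beta\bigr)^{s_{k+1}} \le \mathrm{exp}\bigl(-t\, s_{k+1}/\tilde\beta\bigr).
\]
Choosing $t_k = \tilde\beta \log(T/p)/s_{k+1}$ makes the right-hand side at most $p/T$. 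Since this conditional bound holds for every realization of $X_k$, the unconditional probability that $X_{k+1}/X_k < 1 - t_k$ is also at most $p/T$, so a union bound over $k=0,1,\ldots,T-1$ gives that, with probability at least $1-p$, the inequality $X_{k+1}/X_k \ge 1 - \tilde\beta\log(T/p)/s_{k+1}$ holds simultaneously for all $k$.

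On this good event, using $s_{k+1} \ge s_{\min}$ together with the elementary inequality $\prod_k(1-a_k) \ge 1 - \sum_k a_k$ (valid for $a_k \in [0,1]$), I obtain
\[
    X_T \ge \prod_{k=0}^{T-1}\left(1 - \frac{\tilde\beta \log(T/p)}{s_{\min}}\right) \ge 1 - \frac{\tilde\beta\, T \log(T/p)}{s_{\min}},
\]
as desired. If the right-hand side is already negative, the claim is vacuous since $X_T \ge 0$, so one may assume each $t_k \in [0,1]$.

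The argument is essentially mechanical once Lemma~\ref{lemma:alpha_over_beta} is in hand. The only point requiring care is that the $X_k$'s form a dependent Markov chain, which might seem to obstruct a direct union bound across the $T$ steps; however, because the tail bound on $X_{k+1}/X_k$ is \emph{uniform} in the value of $X_k$, integrating out the conditioning at each step is harmless and the union bound goes through cleanly. This uniform-conditioning step is what I would consider the main subtle point, modest as it is.
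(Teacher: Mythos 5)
Your proof is correct: the decomposition $X_T=\prod_{k=0}^{T-1}(X_{k+1}/X_k)$, the normalization via Lemma~\ref{lemma:alpha_over_beta}, the per-step tail bound $\exp(-t s_{k+1}/\tilde\beta)$ with $t_k=\tilde\beta\log(T/p)/s_{k+1}$, the union bound (valid because the conditional bound is uniform in $X_k$), and the Weierstrass product inequality all go through, and you correctly dispose of the case where the stated lower bound is nonpositive. Note that this paper does not prove Lemma~\ref{lemma:T_s_p} but imports it from \citet{ManurangsiSu21}; your argument is essentially the standard proof given there.
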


\subsection{Proof of (a) in Lemma~\ref{lem:existence_of_T}}
Let $T=  100 \left\lceil \tilde{\beta}\cdot\frac{w_i}{w_j}\cdot\frac{ \log m}{ \log \log m} \right\rceil \ge 100$. 
We will show that $\frac{t_i(m)}{2} \ge T $.
By Lemma~\ref{lemma:t_i_s}, we have
\begin{align*}
    \frac{t_i(m)}{2} - T 
    &\ge 
        \frac{w_i}{2W}\cdot m - \frac{w_i}{2w_{\min}} -  T  
    = 
        w_i \left(\frac{m}{2W}  - \frac{1}{2w_{\min}} -  \frac{T}{w_i}\right)  
    \ge 
        w_i \left(\frac{m}{2W} - \frac{1}{w_{\min}} -  \frac{T}{w_i}  \right).
\end{align*}
We will prove that $\frac{m}{2W}  - \frac{1}{w_{\min}} - \frac{T}{w_i} \ge 0$.
The inequality $C \ge \frac{w_{\max}}{w_{\min}}\ge \frac{W}{w_{\min}\cdot n}$ implies that $\frac{m}{4W} \ge \frac{m}{4C\cdot w_{\min}\cdot n}$.
Since the function $\frac{x}{\log x}$ is non-decreasing for $x \ge e$ (as its derivative is $\frac{\log x - 1}{\log^2 x}$), and we have
$\frac{m}{n} \ge 10^6 \tilde{\beta} \cdot C \cdot \frac{\log n}{ \log \log n}$, it follows that $\frac{m/n}{\log(m/n)} \ge 4\cdot 10^3 \tilde{\beta} \cdot C$ holds for sufficiently large $n$. Therefore,
\[
    \frac{m}{4W} 
    \ge 
        \frac{m}{4C\cdot w_{\min} \cdot n}
    \ge \frac{10^3 \tilde{\beta}}{w_{\min}}\log\left(\frac{m}{n}\right).
\]
This implies that
\begin{align}
    \frac{m}{2W}  - \frac{1}{w_{\min}} - \frac{T}{w_{i}}
    &= 
        \frac{m}{4W} + \frac{m}{4W}  - \frac{1}{w_{\min}}  - \frac{T}{w_{i}} \nonumber\\
    &\ge  
        \frac{10^6 \tilde{\beta}}{4 w_{\min}} \cdot \frac{\log n}{\log \log n} + \frac{10^3 \tilde{\beta}}
        {w_{\min}}\log\left(\frac{m}{n}\right) - \frac{1}{w_{\min}} - \frac{T}{w_{i}} \nonumber\\
    &\ge  
        \frac{10^3 \tilde{\beta}}{w_{\min}} \cdot \frac{ \log n}{\log \log n} + \frac{10^3 \tilde{\beta}}{w_{\min}}\log\left(\frac{m}{n}\right) - \frac{T}{w_{i}}  
        \tag{by $\frac{10^6 \tilde{\beta}}{4 w_{\min}} \cdot \frac{\log n}{ \log \log n} - \frac{1}{w_{\min}} \ge \frac{10^3 \tilde{\beta}}{w_{\min}} \cdot \frac{\log n}{\log \log n}$} \nonumber\\
    &\ge  
        \frac{10^3 \tilde{\beta}}{w_{\min}} \cdot \frac{ \log n}{\log \log m}+ \frac{10^3 \tilde{\beta}}{w_{\min}}\log\left(\frac{m}{n}\right)-   \frac{T}{w_{i}} 
        \tag{by $ m \ge n$} \nonumber\\
    &\ge   
        \frac{10^3 \tilde{\beta}}{w_{\min}} \cdot \frac{\log n}{\log \log m} 
        + \frac{10^3 \tilde{\beta}}{w_{\min}} \cdot \frac{\log (m/n)}{ \log \log m} - \frac{T}{w_{i}} 
        \tag{by $\log \log m \ge \log \log n \ge 1$} \nonumber\\
    &=  
        \frac{10^3 \tilde{\beta}}{w_{\min}} \cdot \frac{\log m}{\log \log m} 
        - \frac{T}{w_{i}} \nonumber\\
    &\ge  
        \frac{10^3 \tilde{\beta}}{w_{\min}} \cdot \frac{\log m}{\log \log m} 
        -\frac{100  \tilde{\beta}}{w_j} \cdot \frac{\log m}{\log \log m} - \frac{100}{w_i}
        \tag{by $T \le 100 \tilde{\beta} \cdot \frac{w_i}{w_j} \cdot \frac{ \log m}{ \log \log m}+100$} \nonumber\\
    &\ge  
        \frac{10^3 \tilde{\beta}}{w_{\min}} \cdot \frac{\log m}{\log \log m} 
        -\frac{100  \tilde{\beta}}{w_{\min}} \cdot \frac{\log m}{\log \log m} - \frac{100}{w_{\min}}
        \tag{by $w_i, w_j \ge w_{\min}$} \nonumber\\
    &\ge 
        0 \label{inequality:m_2W}.
\end{align}
Thus, we have $\frac{t_i(m)}{2} \ge T $.
This establishes that $T$ and $s^i(T)$ are well-defined, and $t_i(m)\ge T \ge 1$.

\subsection{Proof of (b) in Lemma~\ref{lem:existence_of_T}}
We show that with probability at least $1 - O(1/m^3)$, agent $i$ receives an item of value at least $1/2$ in her $T$-th pick. 
By Lemma~\ref{lemma:picking-generation}, $X_{1}, X_{2},\ldots, X_{T}$ are sampled according to
$
    X_{k} \sim \mathcal{D}_{\le X_{k-1}}^{\max(m - s^i(k) + 1)}
$
for each $k \in [T]$. 

First, from \eqref{inequality:m_2W} and Lemma~\ref{lemma:t_i_s}, we have
\begin{equation*}
    \frac{m}{2} - s^i(T) 
    \ge 
        \frac{m}{2} - W \left(\frac{T}{w_i} + \frac{1}{w_{\min}}\right) 
    = 
        W\left(\frac{m}{2W}  - \frac{1}{w_{\min}} - \frac{T}{w_i} \right) 
    \ge 
        0.
\end{equation*}
Thus, we can observe that at any step before the $T$-th pick of agent~$i$, at least half of the items remain available. 
Specifically, for any $k\in[T]$, it holds that $m -  s^i(k) + 1 \ge m -  s^i(T) + 1 \ge m- m/2 +1 \ge m/2$.

Next, since $T=   100 \left\lceil \tilde{\beta}\cdot \frac{w_i}{w_j}\cdot \frac{ \log m}{ \log \log m} \right\rceil$ and
$\frac{w_i}{w_j} \le C$ where $C \ge 1$, we obtain $T \le 200 \tilde{\beta}\cdot C\cdot \frac{ \log m}{ \log \log m}$. 
This implies that $\frac{\beta}{\alpha}\cdot \frac{T \log(T m^3)}{m/2} \le \frac{1}{2}$ for sufficiently large $m$.
Applying Lemma~\ref{lemma:T_s_p} with $p=1/m^3$, we obtain
\[
    \mathrm{Pr}\left[ X_{T} \ge \frac{1}{2} \right] 
    \ge 
        \mathrm{Pr}\left[ X_{T} \ge 1 - \frac{\beta}{\alpha} \cdot \frac{T \log(Tm^3)}{m/2} \right]
    \ge 
        \mathrm{Pr}\left[ X_{T} \ge 1 - \frac{\beta}{\alpha} \cdot \frac{T \log(T/p)}{m -  s^i(T) + 1} \right] 
    \ge 
        1-\frac{1}{m^3},
\]
as desired.

\subsection{Proof of (c) in Lemma~\ref{lem:existence_of_T}}
Our goal is to show that
\[
    \mathrm{Pr}\left[\sum_{k=1}^{T}  \sum_{\ell=1}^{\tau_{k}}\left(1-  \frac{X_{k,\ell}^{j}}{X_{k}} \right)\ge 2 \right] = 1-O\left(\frac{1}{m^3}\right).
\]
Since agent $i$ picks item $g_{T+1}^i$, by description of Algorithm~\ref{alg:WPS-mechanism}, we have
\begin{equation*}
    \frac{T}{w_i} \le \frac{\sum_{k=1}^T \tau_{k} +1}{w_j}.
\end{equation*}
It follows that
\begin{align*}
\sum_{k=1}^T \tau_{k}\ge \frac{w_j}{w_i}\cdot T -1  \ge \frac{100 \tilde{\beta}  \log m}{ \log \log m} -1 \ge \frac{80 \tilde{\beta}  \log m}{ \log \log m}.
\end{align*}
In particular,
\begin{align*}
\frac{8\tilde{\beta}}{\sum_{k=1}^T \tau_{k}}
\le \frac{\log\log m}{10\log m} \le 1.
\end{align*}

When conditioning on $X_{1} = x_1, \ldots, X_{T} = x_T$, 
the random variables 
$\frac{X^{j}_{1,1}}{X_{1}}, \frac{X^{j}_{1,2}}{X_{1}}, \ldots, \frac{X^{j}_{1,\tau_1}}{X_{1}}, \frac{X^{j}_{2,1}}{X_{2}}, \frac{X^{j}_{2,2}}{X_{2}}, \ldots, \frac{X^{j}_{2,\tau_2}}{X_{2}}, \ldots$, 
$\frac{X^{j}_{T,1}}{X_{T}}, \frac{X^{j}_{T,2}}{X_{T}}, \ldots, \frac{X^{j}_{T,\tau_T}}{X_{T}}$
are all independent.
By Lemma~\ref{lemma:alpha_over_beta}, these random variables are generated from an $(\alpha/\beta, \beta/\alpha)$-PDF-bounded distribution.
Therefore, applying Lemma~\ref{lemma:r_c_d} with $r = \sum_{k=1}^T \tau_{k}$, $c = 2$, and $d = \left\lfloor \frac{\sum_{k=1}^T \tau_{k}}{4} \right\rfloor \ge \frac{\sum_{k=1}^T \tau_{k}}{8}$,
we get
\begin{align*}
    & 
        \mathrm{Pr}\left[\sum_{k=1}^T \sum_{\ell=1}^{\tau_{k}} \frac{X^{j}_{k,\ell}}{X_{k}} 
        \ge \sum_{k=1}^T \tau_{k} - 2\,\middle|\, X_{1} = x_1, X_{2} = x_2, \ldots, X_{T} = x_T \right] \\
    &\le 
        2^{\sum_{k=1}^T \tau_{k}} \cdot\left(\frac{\tilde{\beta}}{\left\lfloor \frac{\sum_{k=1}^T \tau_{k}}{4} \right\rfloor}\right)^{\sum_{k=1}^T \tau_{k}-2\cdot\left\lfloor \frac{\sum_{k=1}^T \tau_{k}}{4} \right\rfloor} \\
    &\le 
        2^{\sum_{k=1}^T \tau_{k}} \cdot\left(\frac{8\tilde{\beta}}{\sum_{k=1}^T \tau_{k}}\right)^{\sum_{k=1}^T \tau_{k}-2\cdot\left\lfloor \frac{\sum_{k=1}^T \tau_{k}}{4} \right\rfloor} \\
    &\le 
        2^{\sum_{k=1}^T \tau_{k}} \cdot\left(\frac{8\tilde{\beta}}{\sum_{k=1}^T \tau_{k}}\right)^{\sum_{k=1}^T \tau_{k}-2\cdot\frac{\sum_{k=1}^T \tau_{k}}{4}} \tag{by $\frac{8\tilde{\beta}}{\sum_{k=1}^T \tau_{k}} \le 1$} \\
    &= 
        \left(\frac{32\tilde{\beta} }{\sum_{k=1}^T \tau_{k}}\right)^{\frac{1}{2}\sum_{k=1}^T \tau_{k} } \\
    &\le 
        \left(\frac{32\tilde{\beta} }{\frac{80 \tilde{\beta}  \log m}{ \log \log m}}\right)^{\frac{1}{2}\cdot\frac{80 \tilde{\beta}  \log m}{ \log \log m} } \\
    &= 
        \left(\frac{32\log \log m }{80   \log m}\right)^{\frac{1}{2}\cdot\frac{80 \tilde{\beta}  \log m}{ \log \log m} } \\
    &\le 
        \left(\frac{ \log \log m}{ \log m}\right)^{\frac{40 \tilde{\beta}  \log m}{ \log \log m} } \\
    &\le 
        \left(\frac{1}{ \sqrt{\log m}}\right)^{\frac{6 \log m}{ \log \log m} } \tag{by $\log \log m \le \sqrt{\log m}$}\\
    &= 
        \frac{1}{m^3}.
\end{align*}
Taking the expectation with respect to all possible values of $x_1,x_2,\ldots,x_T$, we get
\[
    \mathrm{Pr}\left[\sum_{k=1}^T \sum_{\ell=1}^{\tau_{k}} \frac{X^{j}_{k,\ell}}{X_{k}}  
    \ge \sum_{k=1}^T \tau_{k} - 2 \right] 
    = O\left(\frac{1}{m^3}\right).
\]
Hence, with probability at least $1-O\left(\frac{1}{m^3}\right)$, it holds that
\begin{align*}
    \sum_{k=1}^{T}  \sum_{\ell=1}^{\tau_{k}}\left(1-  \frac{X_{k,\ell}^{j}}{X_{k}} \right)
    &\ge \sum_{k=1}^T \tau_{k} -  \left(\sum_{k=1}^T \tau_{k} - 2 \right) = 2,
\end{align*} 
completing the proof.

\section{Proof of Lemma~\ref{lemma:existence_of_perfect_s_matching_in_random_graph}}\label{appendix:proof_existence_of_perfect_s_matching_in_random_graph}

    By Proposition~\ref{proposition:a_perfect_s-matching}, if there is no left-saturating $\bm{s}$-matching, then there must exist a subset $Y\subseteq L$ such that $|N_G(Y)| < \sum_{i\in Y} s_i$. 
    For any $k\in[n]$, let $\Gamma_k$ be the sum of the $k$ largest elements of $\bm{s}$,
    and $\gamma_k$ be the sum of the $k$ smallest elements of $\bm{s}$.
    By the union bound, the probability that there is no left-saturating $\bm{s}$-matching can be bounded as follows:
    \begin{align*}
        &\mathrm{Pr}\left[ \text{No left-saturating $\bm{s}$-matching exists} \right] \\
        &\le 
            \mathrm{Pr}\left[ \text{$\exists Y\subseteq L$ such that $|N_G(Y)| < \sum_{i\in Y} s_i$} \right] \\
        &\le 
            \sum_{k=1}^n \mathrm{Pr}\left[ \text{$\exists Y\subseteq L$ such that $|Y|=k, |N_G(Y)| < \sum_{i\in Y} s_i$} \right] \\
        &= 
            \sum_{k=1}^n \mathrm{Pr}\left[ \text{$\exists Y\subseteq L$, $Z\subseteq R$ such that $|Y|=k, |Z|=\sum_{i\in Y} s_i-1, N_G(Y) \subseteq Z$} \right] \\
        &\le 
            \sum_{k=1}^n \sum_{\substack{Y\subseteq L,\, Z\subseteq R \\ |Y|=k,\, |Z|=\sum_{i\in Y} s_i-1}}\mathrm{Pr}\left[ N_G(Y) \subseteq Z \right] \\
        &= 
            \sum_{k=1}^n \sum_{\substack{Y\subseteq L,\, Z\subseteq R \\ |Y|=k,\, |Z|=\sum_{i\in Y} s_i-1}}
            (1-p)^{k\cdot \left(m-\sum_{i\in Y} s_i+1\right)} \\
        &\le 
            \sum_{k=1}^n \sum_{\substack{Y\subseteq L, \\ |Y|=k}} {m \choose m-\sum_{i\in Y} s_i+1}\,
            \mathrm{exp}\left(-p k\cdot \left(m-\sum_{i\in Y} s_i +1\right)\right) \tag{by $(1-p)^{k \ell} \le \mathrm{exp}(-pk\ell)$ for any $\ell\ge 0$}\\
        &\le 
            \sum_{k=1}^n \sum_{\substack{Y\subseteq L, \\ |Y|=k}} m^{\min\left\{\sum_{i\in Y} s_i,m-\sum_{i\in Y} s_i+1 \right\}}\cdot
            \mathrm{exp}\left(-p k\cdot \left(m-\sum_{i\in Y} s_i +1\right)\right) \\
        &= 
            \sum_{k=1}^n \sum_{\substack{Y\subseteq L, \\ |Y|=k}} 
            \mathrm{exp}\left(\min\left\{\sum_{i\in Y} s_i,m-\sum_{i\in Y} s_i+1 \right\}\cdot \log m 
            -p k\cdot \left(m-\sum_{i\in Y} s_i +1\right)\right) \\
        &= 
            \sum_{k=1}^n \sum_{\substack{Y\subseteq L, \\ |Y|=k}} 
            \mathrm{exp}\left(\min\left\{\sum_{i\in Y} s_i,m-\sum_{i\in Y} s_i+1 \right\}\cdot \log m \right.\\ 
        &\quad\left.
            -p \cdot \frac{k}{\sum_{i\in Y} s_i} \cdot \max\left\{\sum_{i\in Y} s_i, m-\sum_{i\in Y} s_i+1 \right\}\cdot \min\left\{\sum_{i\in Y} s_i, m-\sum_{i\in Y} s_i+1 \right\}\right) \\
        &\le 
            \sum_{k=1}^n 
            \sum_{\substack{Y\subseteq L, \\ |Y|=k}} 
            \mathrm{exp}\left(\left(\log m-p \cdot\frac{k}{\sum_{i\in Y} s_i} \cdot \max\left\{\sum_{i\in Y} s_i, m-\sum_{i\in Y} s_i+1 \right\}\right) \cdot\min\left\{\sum_{i\in Y} s_i,m-\sum_{i\in Y} s_i+1 \right\}\right).
    \end{align*}
For $Y\subseteq L$ with $|Y|=k$, observe that $\max\left\{\sum_{i\in Y} s_i, m-\sum_{i\in Y} s_i+1 \right\} \ge \frac{m}{2} \ge \frac{n s_{\min}}{2}$ and $\frac{k}{\sum_{i\in Y} s_i} \ge \frac{1}{s_{\max}}$. These inequalities yield
\[
    \log m-p \cdot \frac{k}{\sum_{i\in Y} s_i} \cdot \max\left\{\sum_{i\in Y} s_i, m-\sum_{i\in Y} s_i+1 \right\} 
    \le \log m-p \cdot \frac{1}{s_{\max}} \cdot \frac{n s_{\min}}{2} \le \log m-p \cdot \frac{n}{2B}
    \le - 3\log m,
\]
where we use the assumption $p \ge 8 B \log m / n$ for the last inequality.
Moreover, since $s_i \ge 1$ for all $i\in N$, we have $k \le \gamma_k$ and $n-k \le \Gamma_n-\Gamma_k \le m-\Gamma_k$ for all $k\in [n]$.
Combining these bounds, we deduce that
\begin{align*}
    &\mathrm{Pr}\left[ \text{No left-saturating $\bm{s}$-matching exists} \right] \\
    &\le \sum_{k=1}^n 
        \sum_{\substack{Y\subseteq L, \\ |Y|=k}} 
        \mathrm{exp}\left(- 3\log m \cdot
        \min\left\{\gamma_k,m-\Gamma_k+1 \right\}\right) \\
    &\le \sum_{k=1}^n 
        {n \choose k}\,
        \mathrm{exp}\left( - 3\log m \cdot
        \min\left\{\gamma_k,m-\Gamma_k+1 \right\}\right) \\
    &\le \sum_{k=1}^n 
        n^{\min\{k, n-k\}}\cdot
        \mathrm{exp}\left( - 3\log m \cdot
        \min\left\{\gamma_k,m-\Gamma_k+1 \right\}\right) \\
    &= \sum_{k=1}^n 
        \mathrm{exp}\left(\log n\cdot \min\left\{k,n-k \right\}\right)\cdot
        \mathrm{exp}\left( - 3\log m \cdot
        \min\left\{\gamma_k,m-\Gamma_k+1 \right\}\right) \\
    &\le \sum_{k=1}^n 
        \mathrm{exp}\left( \log m \cdot \min\left\{k,n-k \right\} - 3\log m \cdot
        \min\left\{\gamma_k,m-\Gamma_k+1 \right\}\right) \tag{by $\log n \le \log m$}\\
    &\le \sum_{k=1}^n 
        \mathrm{exp}\left( - 2\log m \cdot
        \min\left\{\gamma_k,m-\Gamma_k+1 \right\}\right) \tag{by $k \le \gamma_k$ and $n-k \le m-\Gamma_k+1$}\\
    &\le \sum_{k=1}^n 
        \mathrm{exp}\left( - 2\log m \right) \tag{by $\min\left\{\gamma_k,m-\Gamma_k+1 \right\} \ge 1$}\\
    &\le \frac{1}{n} \tag{by $n\le m$}.
\end{align*}
Therefore, the probability that $G$ contains a left-saturating $\bm{s}$-matching is at least $1-1/n$, which approaches $1$ as $n \to \infty$.

\end{document}